\newtheorem{theorem}{Theorem}[section]
\newtheorem{lemma}[theorem]{Lemma}
\newtheorem{proposition}[theorem]{Proposition}
\newtheorem{corollary}[theorem]{Corollary}
\newtheorem{conjecture}[theorem]{Conjecture}
\newtheorem{definition}[theorem]{Definition}
\newtheorem{remark}[theorem]{Remark}
\newtheorem{claim}[theorem]{Claim}
\newtheorem{question}[theorem]{Question}
\newtheorem{fact}[theorem]{Fact}
\newcommand{\ma}{\mathcal}
\newcommand{\s}{\subseteq}
\newcommand{\bi}{\binom}
\newcommand{\fr}{\frac}
\newcommand{\lc}{\lceil}
\newcommand{\rc}{\rceil}
\newcommand{\ep}{\epsilon}
\newcommand{\la}{\lambda}
\newcommand{\no}{\noindent}
\newcommand{\Var}{\mathrm{Var}}
\newcommand{\E}{\mathbb{E}}
\begin{document}

\title{Near optimal constructions of frameproof codes}
\author{Miao Liu\thanks{M. Liu is with the Research Center for Mathematics and Interdisciplinary Sciences, Shandong University, Qingdao 266237, China (e-mail: liumiao10300403@163.com).}, Zengjiao Ma\thanks{Z. Ma is with Key Laboratory of Cryptologic Technology and Information Security of Ministry of Education,
School of Cyber Science and Technology, Shandong University, Qingdao 266237, China (e-mail: mazj0020@foxmail.com).}, and Chong Shangguan\thanks{C. Shangguan is with the Research Center for Mathematics and Interdisciplinary Sciences, Shandong University, Qingdao 266237, China, and the Frontiers Science Center for Nonlinear Expectations, Ministry of Education, Qingdao 266237, China (e-mail: theoreming@163.com).}}
\date{}
\maketitle

\begin{abstract}
\noindent 
Frameproof codes are a class of secure codes that were originally introduced in the pioneering work of Boneh and Shaw in the context of digital fingerprinting. They can be used to enhance the security and credibility of digital content. Let $M_{c,l}(q)$ denote the largest cardinality of a $q$-ary $c$-frameproof code with length $l$. Based on an intriguing observation that relates $M_{c,l}(q)$ to the renowned Erd\H{o}s Matching Conjecture in extremal set theory, in 2003, Blackburn posed an open problem on the precise value of the limit $R_{c,l}=\lim_{q\rightarrow\infty}\frac{M_{c,l}(q)}{q^{\lc l/c \rc}}$. By combining several ideas from the probabilistic method, we present a lower bound for $M_{c,l}(q)$, which, together with an upper bound of Blackburn, completely determines $R_{c,l}$ for {\it all} fixed $c,l$, and resolves the above open problem in the full generality. We also present an improved upper bound for $M_{c,l}(q)$.
\end{abstract}


\section{Introduction}
\noindent For positive integers $q,l$, a $q$-ary code of length $l$ is a subset of $[q]^l$, where $[q]=\{1,\ldots,q\}$. A vector $x\in [q]^l$ is denoted by $x=(x_1,\ldots,x_l)$. For an integer $c\ge 1$ and $c$ vectors $\{x^1,\ldots,x^c\}\subseteq [q]^l$, the set of {\it descendants} $desc(\{x^1,\ldots,x^c\})$ consists of all vectors $y\in[q]^l$ such that for all $i\in[l]$, $y_i\in\{x^1_i,\ldots,x^c_i\}$. A code $\ma{C}\s [q]^l$ is {\it $c$-frameproof} or a {\it $(c,l,q)$-frameproof code} if for every set of $c+1$ distinct codewords $\{x^0,x^1,\ldots,x^c\}\subseteq\ma{C}$, it holds that $x^0\not\in desc(\{x^1,\ldots,x^c\})$; or equivalently, $desc(\{x^1,\ldots,x^c\})\cap\ma{C}=\{x^1,\ldots,x^c\}$.

Frameproof codes were originally introduced in the pioneering work of Boneh and Shaw \cite{boneh1998collusion} in the context of digital fingerprinting, where they were used as a technique to protect copyrighted materials from unauthorized use. In the following, we will illustrate a typical scenario in which frameproof codes can be used. Suppose that a set $\ma{C}\subseteq [q]^l$ of distinct vector labels, known as digital fingerprints, are inserted into $|\mathcal{C}|$ copies of some copyrighted data, before the distributor wants to sell the copies to different users. Loosely speaking, whenever a coalition of $c$ users wants to produce a pirate copy of the digital data, they may compare the coordinates of their fingerprints, say $x^1,\ldots,x^c\in\mathcal{C}$, and produce a pirate copy by taking some vector $x^0\in desc(\{x^1,\ldots,x^c\})$. If $x^0\in\mathcal{C}$, then once the pirate copy is confiscated, the innocent user who holds the fingerprint $x^0$ may be thought to be a member of the coalition; in other words, this user is framed as an illegal user. It is clear by definition that a set $\ma{C}$ of fingerprints is $c$-frameproof if and only if any coalition of at most $c$ users can not frame another user not in this coalition.

Staddon, Stinson, and Wei \cite{staddon2001combinatorial} demonstrated that frameproof codes can also be applied to the traitor tracing schemes studied by Chor, Fiat, and Naor \cite{chor2000tracing}. Frameproof codes are also closely related to a number of well-studied secure codes and combinatorial objects, as briefly mentioned below.
Separable codes are ``weaker'' than frameproof codes in the sense that if a code is $c$-frameproof then it is also $c$-separable. They have found several applications in multimedia fingerprinting; see, e.g., \cite{blackburn2015probabilistic,cheng2011separable,gao2014new}.
Parent-identifying codes and traceability codes are ``stronger'' than frameproof codes in the sense that if a code is $c$-parent-identifying or $c$-traceability then it is also $c$-frameproof. They have been applied to traceability schemes; see, e.g. \cite{blackburn2010traceability,staddon2001combinatorial,gu2019probabilistic,gu2017bounds,
      kabatiansky2019traceability,shangguan2018new}.
Separating hash families are generalizations of frameproof codes; see, e.g., \cite{bazrafshan2011bounds,blackburn2008bound,ge2019some,shangguan2016separating,stinson2008generalized}.
Cover-free families are closely related to frameproof codes. Their relations have been studied in several papers; see, e.g., \cite{dyachkov2017cover,ge2019some,staddon2001combinatorial}.

The main problem in this field is to determine for given parameters $c,l,q$, the maximum possible cardinality of a $(c,l,q)$-frameproof code. This problem has been extensively studied in the last two decades; see \cite{blackburn2003frameproof,boneh1998collusion,chee2012improved,cheng2019improved,cheng2011anti,dyachkov2017cover,ge2019some,randriambololona20132,shangguan2017new,
shchukin2016list,staddon2001combinatorial,stinson2000secure,van2014tight,vorob2017bounds,xing2002asymptotic,yang2016new} for some examples. Let $M_{c,l}(q)$ be the largest cardinality of a $(c,l,q)$-frameproof code. To avoid trivialities, we assume that $c\ge 2$ and $q\ge 2$.

Almost 20 years ago, Blackburn \cite{blackburn2003frameproof} had an interesting observation showing that $M_{c,l}(q)$ is closely related to a beautiful extremal set theory problem raised by Erd\H{o}s, as detailed below.

\begin{definition}\label{def:ext-set}
 For nonnegative integers $l\ge t(\la +1)$, let $m(l,t,\la)$ be the maximum size of a $t$-uniform hypergraph on $l$ vertices that does not contain $\la +1$ pairwise disjoint edges.
\end{definition}

Note that we postpone the definition of a uniform hypergraph to \cref{sec:pre}. Erd\H{o}s has a well-known conjecture on the precise value of $m(l,t,\la)$.

\begin{conjecture}[Erd\H{o}s Matching Conjecture, see e.g. \cite{erdHos1965problem}]\label{con:erd-mat}
 If $l\ge t(\la +1)$, then $$m(l,t,\la)=\max\Big\{\bi{l}{t}-\bi{l-\la}{t},\bi{t(\la+1)-1}{t}\Big\}.$$
\end{conjecture}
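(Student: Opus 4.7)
The plan is to establish the two directions separately. The lower bound $m(l,t,\la)\ge\max\{\bi{l}{t}-\bi{l-\la}{t},\bi{t(\la+1)-1}{t}\}$ is immediate from two well-known explicit constructions on the vertex set $[l]$. First, fix $S\s[l]$ with $|S|=\la$ and let $\ma{H}_1$ consist of all $t$-subsets of $[l]$ meeting $S$; then $|\ma{H}_1|=\bi{l}{t}-\bi{l-\la}{t}$, and by pigeonhole any $\la+1$ edges of $\ma{H}_1$ must contain two sharing a vertex of $S$, so they cannot be pairwise disjoint. Second, let $\ma{H}_2$ be all $t$-subsets of a fixed $(t(\la+1)-1)$-element subset of $[l]$; this has $\bi{t(\la+1)-1}{t}$ edges and simply cannot contain $\la+1$ disjoint $t$-sets for lack of vertices.

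For the matching upper bound, the natural line of attack is the shifting technique of Erd\H{o}s--Ko--Rado. Given a $t$-uniform $\ma{H}$ on $[l]$ avoiding $\la+1$ pairwise disjoint edges, one repeatedly applies the shift $S_{ij}$ for $i<j$, which replaces each $e\ni j,\, e\not\ni i$ by $(e\setminus\{j\})\cup\{i\}$ whenever the result is not already in $\ma{H}$. This operation preserves $|\ma{H}|$ and the matching-free property. For the resulting shifted family, one then runs a link/kernel argument: partition $\ma{H}$ according to whether an edge contains vertex $1$, analyze the link at $1$ and the trace on $[l]\setminus\{1\}$, and induct on $l$ and $\la$. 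The goal is to show that any extremal shifted family must, up to relabeling, coincide with one of $\ma{H}_1$ or $\ma{H}_2$.

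The hard part, and the reason this plan cannot realistically be completed as stated, is that the statement in question is the celebrated Erd\H{o}s Matching Conjecture, which has stood open since 1965. The case $\la=1$ is the Erd\H{o}s--Ko--Rado theorem; Erd\H{o}s himself established the full conjecture only for $l\ge l_0(t,\la)$, and subsequent work of Bollob\'as--Daykin--Erd\H{o}s, Frankl, Huang--Loh--Sudakov, and Frankl--Kupavskii has gradually pushed the threshold down to roughly $l\ge(2\la+1)t-\la$, with the case $t=3$ fully resolved by Frankl. The intermediate regime $t(\la+1)\le l<(2\la+1)t-\la$ remains wide open. So the best my plan could do is recover these partial results rather than settle the conjecture. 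Fortunately, for the application to frameproof codes one need only control $m(l,t,\la)$ in the large-$l$ regime already covered by Erd\H{o}s's theorem, which is enough to pin down the asymptotic constant $R_{c,l}$ appearing in Blackburn's question; this is presumably how the present paper circumvents the obstacle.
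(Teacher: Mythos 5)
Your assessment is exactly right: this statement is the Erd\H{o}s Matching Conjecture, stated in the paper \emph{as a conjecture}, and the paper offers no proof of it (nor does it need one). Your lower-bound constructions $\ma{H}_1$ (all $t$-sets meeting a fixed $\la$-set) and $\ma{H}_2$ (all $t$-sets inside a fixed $(t(\la+1)-1)$-set) are the standard extremal examples and are correct, and your summary of the known partial results on the upper bound matches the survey given in the paper's introduction. One small correction to your closing remark: the paper does \emph{not} circumvent the conjecture by restricting to the large-$l$ regime. Its main result (\cref{cor:limit}) determines $R_{c,l}=\bi{l}{\lc l/c\rc}\big/\big(\bi{l}{\lc l/c\rc}-m(l,\lc l/c\rc,\la)\big)$ for \emph{all} fixed $c,l$, treating $m(l,\lc l/c\rc,\la)$ as a finite (if not explicitly known) combinatorial constant; both the probabilistic lower bound and Blackburn's upper bound are phrased directly in terms of $m$, so no case of the conjecture is ever invoked. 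Indeed, the relevant parameters satisfy $t(\la+1)\le l\le ct$ with $t=\lc l/c\rc$, which for $\la$ close to $c-1$ lies precisely in the open intermediate regime you describe, so the "already covered by Erd\H{o}s's theorem" escape route would not work anyway.
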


The following theorem of Blackburn \cite{blackburn2003frameproof} provides an upper bound on $M_{c,l}(q)$ using $m(l,t,\la)$.

\begin{theorem}[see Corollary 9, \cite{blackburn2003frameproof}]\label{thm:blackburn}
Let $0\le\la\le c-1$ be the unique integer that satisfies $l=c(\lc l/c \rc-1)+\la+1$. Then
\begin{equation}\label{eq:Bla-upp}
  M_{c,l}(q)\le\fr{\bi{l}{\lc l/c \rc}q^{\lc l/c \rc}}{\bi{l}{\lc l/c \rc}-m(l,\lc l/c \rc,\la)}+\bi{l}{\lc l/c \rc-1}q^{\lc l/c \rc-1}.
\end{equation}
\end{theorem}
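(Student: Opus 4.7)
Let $k=\lc l/c\rc$, so the hypothesis becomes $l=c(k-1)+\la+1$. The plan is to prove a structural lemma controlling how often any single codeword can ``collide'' with others on a $k$-subset of coordinates, and then finish by a double count.

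The frameproof condition is equivalent to the following: for every $x\in\ma{C}$ there do not exist $c$ distinct codewords $y^1,\ldots,y^c\in\ma{C}\setminus\{x\}$ whose agreement sets $T_j:=\{i\in[l]:y^j_i=x_i\}$ collectively cover $[l]$. I will exploit this by studying, for each $x\in\ma{C}$, the $k$-uniform hypergraph
\[
  H_x := \bigl\{S\in\tbinom{[l]}{k} : \exists\, y\in\ma{C}\setminus\{x\},\ y|_S=x|_S\bigr\}.
\]
The central lemma is that $|H_x|\le m(l,k,\la)$ for all $x\in\ma{C}$ (except for a small exceptional family handled below). I would prove the lemma by contradiction: assume $H_x$ contains $\la+1$ pairwise disjoint $k$-edges $S_1,\ldots,S_{\la+1}$, and pick witnesses $y^i\in\ma{C}\setminus\{x\}$ with $y^i|_{S_i}=x|_{S_i}$. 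The identity $(\la+1)k+(c-\la-1)(k-1)=l$ then says that if one can find $c-\la-1$ further codewords $z^j$, distinct from $x$ and the $y^i$'s, whose agreements with $x$ cover disjoint $(k-1)$-blocks partitioning $[l]\setminus\bigcup_i S_i$, then the resulting $c$ codewords frame $x$, contradicting frameproofness. This would force $H_x$ to have matching number at most $\la$ and thus $|H_x|\le m(l,k,\la)$ by \cref{def:ext-set}.

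The second step is a double count. For $S\in\bi{[l]}{k}$ and $v\in[q]^k$ set $N_{S,v}:=|\{x\in\ma{C}:x|_S=v\}|$. Then
\[
  \sum_{x\in\ma{C}}\bigl(\tbinom{l}{k}-|H_x|\bigr)\;=\;\sum_{S\in\binom{[l]}{k}}|\{v:N_{S,v}=1\}|\;\le\;\tbinom{l}{k}q^k,
\]
since the left side counts codeword/$k$-subset pairs on which the codeword is uniquely projected, and each $(S,v)$ contributes at most once. Combining this with $|H_x|\le m(l,k,\la)$ yields $|\ma{C}|\bigl(\bi{l}{k}-m(l,k,\la)\bigr)\le\bi{l}{k}q^k$, which is the first (and dominant) summand of \eqref{eq:Bla-upp}.

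The residual $\bi{l}{k-1}q^{k-1}$ term is where the main obstacle lies. The extension step in the lemma can fail when the remainder coordinates cannot be covered by other codewords; the cleanest obstruction is that $x$ is ``$(k-1)$-uniquely determined'', i.e., there exists a $(k-1)$-subset $T$ and value $w\in[q]^{k-1}$ such that $x$ is the only codeword with $\pi_T(x)=w$. The number of $(k-1)$-uniquely determined codewords is at most $\bi{l}{k-1}q^{k-1}$, and these are simply excluded and absorbed into the additive second term. For the remaining codewords one has two other codewords matching on every $(k-1)$-subset, and the technical heart of the proof is then to verify a Hall-type condition that lets one pick the $c-\la-1$ auxiliary witnesses $z^j$ pairwise distinct and distinct from $x,y^1,\ldots,y^{\la+1}$. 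Making this selection robust in the presence of coincidences among the $y^i$'s is the main delicate point I expect to spend effort on.
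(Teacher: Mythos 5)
Your plan is essentially the paper's own argument: the structural lemma you state ($|H_x|\le m(l,\lc l/c\rc,\la)$ for codewords with no own $(\lc l/c\rc-1)$-subsequence) is exactly \cref{lem:upper-of-own-subsets}, proved the same way (extract $\la+1$ disjoint agreement sets, partition the remaining $(c-\la-1)(\lc l/c\rc-1)$ coordinates into $(\lc l/c\rc-1)$-blocks, and find witnesses for each block), and your double count plus the $\bi{l}{\lc l/c\rc-1}q^{\lc l/c\rc-1}$ bound on the exceptional codewords gives \eqref{eq:Bla-upp}. (The paper itself only quotes \cref{thm:blackburn} from Blackburn and instead proves the sharper \cref{pro:upper-bd}, where the exceptional codewords are charged $(l-\lc l/c\rc+1)q$ disjoint $\lc l/c\rc$-sets each rather than simply removed.)

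The one place you anticipate difficulty is not actually a difficulty: you do \emph{not} need the witnesses $y^1,\ldots,y^{\la+1},z^1,\ldots,z^{c-\la-1}$ to be pairwise distinct, so no Hall-type selection argument is required. Each witness only needs to be distinct from $x$, which is guaranteed by the definitions of a non-own subsequence. If the multiset of witnesses collapses to a set $W$ of size $s<c$, then $x\in desc(W)$ still violates $c$-frameproofness, since $desc$ is monotone under adding codewords (pad $W$ to size $c$ with arbitrary other codewords; if $|\ma{C}|\le c$ the bound is trivial). This is why the paper's proof of \cref{lem:upper-of-own-subsets} explicitly allows the $y^i$ and $z^j$ to be ``not necessarily distinct.'' Dropping that spurious requirement, your outline closes without further work.
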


For fixed positive integers $c,l$, we denote the limit
$$R_{c,l}=\lim_{q\rightarrow\infty}\frac{M_{c,l}(q)}{q^{\lc l/c \rc}}.$$ Interestingly, although the limit $R_{c,l}$ was introduced and used before, as far as we know, the existence of such a limit has never been formally proved for general $c,l$ until this paper (see \cref{cor:limit} below). This limit captures the asymptotic behavior of $M_{c,l}(q)$ as $q$ tends to infinity. By \eqref{eq:Bla-upp} it is clear that if the limit exists then
\begin{equation}\label{eq:Rcl}
R_{c,l}\le\fr{\bi{l}{\lc l/c \rc}}{\bi{l}{\lc l/c \rc}-m(l,\lc l/c \rc,\la)}.
\end{equation}

To the best of our knowledge, in the literature, there are only a few cases where the exact values of $M_{c,l}(q)$ or $R_{c,l}$ are known, as summarized below:

\begin{itemize}
  \item  when $l\le c$, $M_{c,l}(q)=l(q-1)$ \cite{blackburn2003frameproof};
  \item  when $c=2$ and $l$ is even, $R_{2,l}=2$ \cite{blackburn2003frameproof};
  \item  when $l\equiv 1 \pmod{c}$, and $q\ge l$ is a prime power, $M_{c,l}(q)=q^{\lc l/c \rc}$, see \cite{cohen2000efficient} for the lower bound and \cite{van2014tight} for the upper bound;
  \item  when $c+1$ is a prime power and $l=c+2$, $R_{c,c+2}=\fr{c+2}{c}$ \cite{chee2012improved}.
\end{itemize}

The known results on the values of $m(l,\lc l/c \rc,\la)$ indicated that in all the cases listed above, \eqref{eq:Rcl} can be achieved with equality. In fact, the following open question was asked by Blackburn in 2003.

\begin{question}[Blackburn, see Section 8 in \cite{blackburn2003frameproof}]\label{que:Bla}
Is $R_{c,l}=\fr{\bi{l}{\lc l/c \rc}}{\bi{l}{\lc l/c \rc}-m(l,\lc l/c \rc,\la)}$ for all fixed integers $c\ge 2$ and $l\ge 2$, where $0\le\la\le c-1$ satisfies $l=c(\lc l/c \rc-1)+\la+1$?
\end{question}

In this paper, we give a complete and affirmative answer to \cref{que:Bla}, as summarized below.

\subsection{Main result}


\noindent Our first result is a probabilistic construction showing that for all fixed $c,l$ and sufficiently large $q$, the coefficient of $q^{\lc l/c \rc}$ in \eqref{eq:Bla-upp} is always tight. Consequently, \eqref{eq:Rcl} always holds with equality.

\begin{theorem}\label{thm:lower-bd}
  For fixed integers $c\ge 2,~l\ge 2$ and any real $\ep>0$, there exists an integer $q_0=q_0(c,l,\ep)$ such that for any $q>q_0$, $$M_{c,l}(q)\ge (1-\ep)\fr{\bi{l}{\lc l/c \rc}}{\bi{l}{\lc l/c \rc}-m(l,\lc l/c \rc,\la)}q^{\lc l/c \rc},$$ where $0\le\la\le c-1$ is the unique integer that satisfies $l=c(\lc l/c \rc-1)+\la+1$.
\end{theorem}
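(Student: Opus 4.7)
\medskip
\no\textbf{Proof plan.} My plan is to use the probabilistic method, combining a structured random construction with alteration, exploiting an extremal matching-free hypergraph realizing $m(l,\lc l/c \rc,\la)$.

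I would first fix notation: let $k = \lc l/c \rc$, $M = m(l,k,\la)$, $a = \bi{l}{k}$, and $b = a - M$. Let $\ma{H}^* \s \bi{[l]}{k}$ be a $k$-uniform hypergraph on $[l]$ with $|\ma{H}^*| = M$ edges and no $\la + 1$ pairwise disjoint edges (which exists by the definition of $m(l,k,\la)$); set $\ma{F}^* = \bi{[l]}{k} \setminus \ma{H}^*$, so $|\ma{F}^*| = b$. The target code size is $N = (1 - \ep)(a/b) q^k$.

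Next, I would build a structured random candidate set $\ma{C}_0 \s [q]^l$ of size at least of order $N$, biased to respect $\ma{H}^*$. A natural first attempt is a ``multi-MDS'' construction: for each $F \in \ma{F}^*$ and each $y \in [q]^k$, include a candidate codeword $x^{(F,y)}$ whose restriction to $F$ equals $y$ and whose restriction to $[l]\setminus F$ is determined by a random (or Reed--Solomon-like) extension; this yields $\sim b q^k$ candidates. Within each $F$-block, the MDS extension guarantees internal frameproofness via the distance property; across different $F$-blocks the pairwise agreements can be larger, and it is the matching-free property of $\ma{H}^*$ that ultimately constrains the frameproof-violating configurations. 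A random subsampling (or alteration) step then reduces the total to the target size $N$.

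The heart of the argument is bounding the expected number of bad $(c+1)$-tuples $(x^0,x^1,\ldots,x^c) \in \ma{C}_0^{c+1}$, i.e., those with $x^0 \in desc(\{x^1,\ldots,x^c\})$ and $x^0 \notin \{x^1,\ldots,x^c\}$. Here one mirrors the upper-bound argument of Blackburn: for any codeword $x^0$, the $k$-sets $S$ such that some other codeword $x$ has disagreement set $B(x,x^0) := \{i : x_i \ne x^0_i\} \s S$ are forced, by frameproofness, to contain no $\la + 1$ pairwise disjoint members of $\bi{[l]}{k}$. By aligning the candidate set $\ma{C}_0$ with $\ma{H}^*$ one arranges that all ``structured'' bad tuples would involve such $k$-sets lying inside $\ma{H}^*$, which is prohibited by the matching-free property, while the remaining ``generic'' bad tuples are controlled by concentration. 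Combining these, the expected bad-tuple count is brought below $\ep N/2$, and a standard alteration then yields $|\ma{C}| \ge (1 - \ep)(a/b) q^k$.

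The main obstacle I anticipate is the bad-tuple analysis. For completely uniform random inclusion at density $(a/b) q^{k-l}$, the expected number of bad tuples scales as $q^{k + c - \la - 1}$, which strictly dominates the target $q^k$ whenever $\la < c - 1$ (i.e., whenever $l$ is not a multiple of $c$). Overcoming this requires the biased/structured construction to cancel the dominant bad configurations via the combinatorial structure of $\ma{H}^*$; achieving the precise constant $a/(a-M)$ predicted by Blackburn's upper bound likely demands a careful combination of probabilistic tools, such as the Lov\'asz Local Lemma, a R\"odl-nibble style argument, or iterative deletion, to eliminate the remaining lower-order error terms.
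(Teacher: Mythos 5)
You have correctly identified the two external ingredients (the extremal matching-free family $\ma{H}^*$ realizing $m(l,\lc l/c\rc,\la)$ and its complement $\ma{F}^*$, and the target density $\bi{l}{\lc l/c\rc}/(\bi{l}{\lc l/c\rc}-m(l,\lc l/c\rc,\la))$), and your obstacle analysis is accurate: for any i.i.d.\ or ``random extension plus alteration'' scheme the expected number of covering violations is of order $q^{\lc l/c\rc+c-\la-1}$, which swamps the target whenever $\la<c-1$. But the proposal stops exactly where the proof has to begin: you never specify a mechanism by which the structure of $\ma{H}^*$ actually annihilates these dominant bad configurations, and the sentence ``a standard alteration then yields $|\ma{C}|\ge(1-\ep)(a/b)q^k$'' is unsupported --- indeed it is contradicted by your own estimate in the final paragraph. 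As written this is a plan with an acknowledged hole, not a proof.

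The missing idea is the following. Pass to the hypergraph picture (a codeword is an edge of the complete $l$-partite $l$-uniform hypergraph, and frameproof is equivalent to cover-free), and take the code to be the vertex sets $U_1,\ldots,U_m$ of an \emph{induced} $\ma{F}^*$-packing of the complete $l$-partite $\lc l/c\rc$-uniform hypergraph $\ma{H}_l^{(\lc l/c\rc)}(q)$: edge-disjoint faithful copies of $\ma{F}^*$ such that distinct copies share at most $\lc l/c\rc$ vertices, and when they share exactly $\lc l/c\rc$ vertices that $\lc l/c\rc$-set is a \emph{non-edge} of both copies, i.e.\ an edge of the corresponding copy of $\ma{H}^*$. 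Then a covering $U_{c+1}\s\cup_{i=1}^c U_i$ forces, by a pigeonhole count using $l=c(\lc l/c\rc-1)+\la+1$, at least $\la+1$ pairwise disjoint intersections of size exactly $\lc l/c\rc$, all of which lie in the copy of $\ma{H}^*$ on $U_{c+1}$ --- contradicting that $\ma{H}^*$ has no $\la+1$ pairwise disjoint edges. So \emph{every} induced packing is automatically $c$-frameproof; there is no alteration step and no bad-tuple count at all. The probabilistic work is entirely in showing that a near-perfect such packing exists, which the paper does by feeding an auxiliary hypergraph (vertices: a random $(1-\eta)$-dense subset of the $\lc l/c\rc$-sets; edges: copies of $\ma{F}^*$ inside blocks of a preliminary $\ma{H}_l^{(\lc l/c\rc+1)}(1)$-packing) into the Pippenger--Frankl--R\"odl near-perfect matching theorem --- the ``R\"odl-nibble style argument'' you name as one of several possible fixes, but applied to the packing existence problem rather than to the deletion of bad tuples. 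Without this (or an equivalent) mechanism your argument does not close.
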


To prove \cref{thm:lower-bd} we apply several powerful combinatorial ideas developed by previous researchers in the study of cover-free families \cite{erdos1985families,frankl1987colored} and hypergraph packings \cite{frankl1987colored,frankl1985near,rodl1985packing} (see \cref{subsec:packing} for details). Our proof is essentially built on a seminal result of Frankl and R\"{o}dl \cite{rodl1985packing,frankl1985near} and  Pippenger \cite{pippenger1989} on the existence of near optimal hypergraph matchings in hypergraphs with certain pseudorandom property (see \cref{thm:hyper-pac} below). The interested reader is referred to Section 4 in \cite{alon2016probabilistic} for more detailed discussion on this topic.

As our construction giving \cref{thm:lower-bd} is probabilistic, it is interesting to determine whether we can achieve the same lower bound by explicit constructions. We will return to this topic in \cref{sec:con}.

Combining Theorems \ref{thm:blackburn} and \ref{thm:lower-bd}, we have

\begin{corollary}\label{cor:limit}
    For fixed positive integers $c,l$,
    $$R_{c,l}=\lim_{q\rightarrow\infty}\fr{M_{c,l}(q)}{q^{\lc l/c \rc}}=\fr{\bi{l}{\lc l/c \rc}}{\bi{l}{\lc l/c \rc}-m(l,\lc l/c \rc,\la)}.$$
\end{corollary}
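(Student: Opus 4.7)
The plan is to derive \cref{cor:limit} directly from \cref{thm:blackburn} and \cref{thm:lower-bd} via a standard squeeze argument; no new ideas are required. Abbreviate
$$A := \frac{\binom{l}{\lceil l/c\rceil}}{\binom{l}{\lceil l/c\rceil} - m(l,\lceil l/c\rceil,\lambda)}$$
for the conjectured value of $R_{c,l}$. Before taking any limit I would first verify that $A$ is well defined. From $l = c(\lceil l/c\rceil - 1) + \lambda + 1$ a short calculation gives $l - \lceil l/c\rceil(\lambda+1) = (c-\lambda-1)(\lceil l/c\rceil - 1) \ge 0$, so $l \ge \lceil l/c\rceil(\lambda+1)$. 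Consequently the complete $\lceil l/c\rceil$-uniform hypergraph on $[l]$ already contains $\lambda+1$ pairwise disjoint edges, and \cref{def:ext-set} forces $m(l,\lceil l/c\rceil,\lambda) < \binom{l}{\lceil l/c\rceil}$; hence the denominator of $A$ is strictly positive.

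Next I would divide both sides of \eqref{eq:Bla-upp} by $q^{\lceil l/c\rceil}$ to obtain
$$\frac{M_{c,l}(q)}{q^{\lceil l/c\rceil}} \le A + \binom{l}{\lceil l/c\rceil-1}\, q^{-1},$$
whose right-hand side tends to $A$ as $q \to \infty$; this yields $\limsup_{q\to\infty} M_{c,l}(q)/q^{\lceil l/c\rceil} \le A$. In the other direction, \cref{thm:lower-bd} asserts that for every real $\epsilon > 0$ and every integer $q > q_0(c,l,\epsilon)$ one has $M_{c,l}(q)/q^{\lceil l/c\rceil} \ge (1-\epsilon)A$, and hence $\liminf_{q\to\infty} M_{c,l}(q)/q^{\lceil l/c\rceil} \ge (1-\epsilon)A$; letting $\epsilon \to 0^+$ gives $\liminf \ge A$. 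Combining the two inequalities with the trivial $\liminf \le \limsup$ forces $\liminf = \limsup = A$, which simultaneously establishes the existence of $R_{c,l}$ and its value.

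I do not anticipate any obstacle: the entire combinatorial and probabilistic content of the result has been absorbed into \cref{thm:lower-bd}, and the present corollary is simply an asymptotic comparison of two explicit inequalities. The only minor sanity check—positivity of the denominator—is what I would handle in the opening paragraph.
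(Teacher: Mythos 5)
Your proposal is correct and is exactly the argument the paper intends: the corollary is stated as an immediate combination of \cref{thm:blackburn} (giving the upper bound on the $\limsup$ after dividing \eqref{eq:Bla-upp} by $q^{\lceil l/c\rceil}$) and \cref{thm:lower-bd} (giving the matching lower bound on the $\liminf$). Your preliminary check that $m(l,\lceil l/c\rceil,\lambda)<\binom{l}{\lceil l/c\rceil}$, so the limiting value is well defined, is a detail the paper leaves implicit but is verified correctly.
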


When $l$ and $c$ are fixed and $q$ tends to infinity, we may regard $m(l,\lc l/c \rc,\la)$ as a constant that can be computed in finite time. However, determining the exact value of $m(l,t,\la)$ for all positive integers $l,t,\la$ is generally believed to be one of the major open problems in extremal set theory. The trivial case $\lambda=0$ implies that there are no nonempty sets in the family, resulting in $m(l,t,0)=0$. For $\la= 1$, the celebrated Erd\H{o}s-Ko-Rado theorem \cite{erdos1961ko-rado} implies that $m(l,t,1)=\binom{l-1}{t-1}$. Erd\H{o}s and Gallai \cite{erdos1959gallai} resolved the $t=2$ case. Frankl \cite{frankl2017matching} provided a complete solution for the case $t=3$. For general $t$, the following results have been established:

\begin{itemize}
    \item when $l\ge (2\la +1)t -\la$, $m(l,t,\la)=\binom{l}{t}-\binom{l-\la}{t}$ \cite{frankl2013improved}.
    \item for sufficiently large $\la$ and $l\ge \frac{5}{3}\la t -\frac{2}{3}\la$, $m(l,t,\la)=\binom{l}{t}-\binom{l-\la}{t}$ \cite{frankl2022kupa}.
    \item for all $(\la +1)t\le l\le (\la +1)(t+\epsilon)$, where $\epsilon$ depends on $t$, $m(l,t,\la)=\binom{t(\la +1)-1}{t}$ \cite{frankl2017}.
\end{itemize}
\noindent One can substitute the above results on $m(l,t,\la)$ into \cref{cor:limit} to obtain the corresponding values of $R_{c,l}$. We omit the details.

Although Blackburn's upper bound \eqref{eq:Bla-upp} is strong enough to determine $R_{c,l}$ (combined with \cref{thm:lower-bd}), it is still not optimal. The next result shows that for large enough $q$, one can remove the lower order term in \eqref{eq:Bla-upp}.

\begin{proposition}\label{pro:upper-bd}
  For $q\ge\fr{\bi{l}{\lc l/c \rc}-m(l,\lc l/c \rc,\la)}{l-\lc l/c \rc+1}$, we have $M_{c,l}(q)\le\fr{\bi{l}{\lc l/c \rc}}{\bi{l}{\lc l/c \rc}-m(l,\lc l/c \rc,\la)}q^{\lc l/c \rc}$, where $0\le\la\le c-1$ is the unique integer that satisfies satisfies $l=c(\lc l/c \rc-1)+\la+1$.
\end{proposition}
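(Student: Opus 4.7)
Set $t = \lc l/c\rc$, $A = \bi{l}{t}$, and $M = m(l,t,\la)$, and let $\ma{C}\s[q]^l$ be an arbitrary $c$-frameproof code of size $N$. For each pair $(T,v)\in\bi{[l]}{t}\times[q]^t$ write $n(T,v) = |\{x\in\ma{C}: x|_T=v\}|$, and for each $x\in\ma{C}$ set
\[
\ma{P}(x) = \bigl\{T\in\bi{[l]}{t}: n(T,x|_T) = 1\bigr\}, \qquad \ma{B}(x) = \bi{[l]}{t}\setminus \ma{P}(x).
\]
A standard double count yields $\sum_{x\in\ma{C}} |\ma{P}(x)| = \bigl|\{(T,v): n(T,v)=1\}\bigr| \le A q^t$, so once the pointwise estimate $|\ma{B}(x)|\le M$ is established for every $x\in\ma{C}$, summation gives $(A-M)N\le Aq^t$, which is exactly the desired inequality.

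The plan for the pointwise estimate is by contradiction. Suppose $|\ma{B}(x)|>M$ for some $x$. By the definition of $m(l,t,\la)$, the family $\ma{B}(x)$ contains $\la+1$ pairwise disjoint $t$-subsets $T_1,\ldots,T_{\la+1}$; for each $i$ fix a witness $y^i\in\ma{C}\setminus\{x\}$ with $y^i|_{T_i} = x|_{T_i}$. The aim is to produce $c$ codewords in $\ma{C}\setminus\{x\}$ whose coordinatewise agreements with $x$ cover $[l]$, which would make $x$ a descendant of those codewords and contradict $c$-frameproofness (padding to $c$ distinct codewords if necessary, which is possible since $|\ma{C}|$ may be assumed large). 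In the easy case $c=\la+1$ the $T_i$'s already cover $[l]$, since $(\la+1)t = ct = l$, and $\{y^1,\ldots,y^c\}$ suffices immediately.

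The delicate case $c>\la+1$ is where the hypothesis $q\ge(A-M)/(l-t+1)$ enters. The residual set $R = [l]\setminus\bigcup_i T_i$ has size $(t-1)(c-\la-1)$, and we must supply $c-\la-1$ further codewords whose agreements with $x$ cover $R$. The plan is to partition $R$ into disjoint $(t-1)$-subsets $U_1,\ldots,U_{c-\la-1}$ and seek, for each $U_j$, a $z^j\in\ma{C}\setminus\{x\}$ with $z^j|_{U_j} = x|_{U_j}$. If such witnesses can be found simultaneously under some admissible partition of $R$, the augmented set $\{y^i\}\cup\{z^j\}$ yields the required contradiction. Otherwise, every admissible partition of $R$ contains a part $U_j$ on which $x$'s projection is unique, i.e.\ $n(U_j,x|_{U_j})=1$, and each such unique $(t-1)$-subset $U$ forces all $l-t+1$ of its $t$-supersets into $\ma{P}(x)$. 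Combining this shadow lower bound on $|\ma{P}(x)|$ with the Erd\H{o}s-type matching-avoidance estimate on the ``bad'' $(t-1)$-subsets of $R$ (no admissible partition of $R$ can be entirely bad) and invoking $q(l-t+1)\ge A-M$ at the point where the two estimates meet yields $|\ma{B}(x)|\le M$, contradicting $|\ma{B}(x)|>M$.

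\textbf{The main obstacle} is precisely the hard subcase $c>\la+1$: one must orchestrate a double count that balances the shadow contribution from the unique $(t-1)$-subsets $U\s R$ against the assumed excess $|\ma{B}(x)|>M$, with the threshold $q\ge(A-M)/(l-t+1)$ calibrated exactly so that these two estimates combine cleanly. Once the pointwise bound $|\ma{B}(x)|\le M$ is in hand, the initial double count $\sum_x |\ma{P}(x)|\le Aq^t$ closes the argument.
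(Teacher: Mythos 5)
Your opening double count and your treatment of the case where all the residual $(t-1)$-sets are non-own are sound, and that part coincides with the paper's Lemma 4.1 (the partition of $[l]$ into $\la+1$ disjoint $t$-sets $T_i$ and $c-\la-1$ disjoint $(t-1)$-sets, with witnesses $y^i,z^j$ exhibiting $x$ as a descendant). The genuine gap is your pointwise claim $|\ma{B}(x)|\le M$ for \emph{every} codeword: it is false for a codeword possessing an own $(t-1)$-subsequence, and no largeness assumption on $q$ repairs it. Concretely, take $c=3$, $l=4$, so $t=2$, $\la=0$ and $M=m(4,2,0)=0$; the code $\ma{C}=\{(1,1,1,1),(1,1,2,2)\}\s[q]^4$ is $3$-frameproof for every $q$ (and can be padded with codewords on fresh symbols to make it large), yet $x=(1,1,1,1)$ has the non-own $2$-subsequence $x_{\{1,2\}}$, so $|\ma{B}(x)|=1>M$. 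Your proposed rescue cannot close this: one own $(t-1)$-subsequence forces only $l-t+1$ additional members of $\ma{P}(x)$, and $\ma{P}(x)\s\bi{[l]}{t}$ has size at most $\bi{l}{t}$ \emph{independently of $q$}, so the hypothesis $q(l-t+1)\ge\bi{l}{t}-M$ has no mechanism by which to enter a per-codeword estimate on $|\ma{B}(x)|$.

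The paper avoids this by giving up the pointwise bound for exactly those codewords and changing the currency of the count. It works in $E(\ma{H}_l^{(\lc l/c\rc)}(q))$, i.e., over all pairs $(T,v)$ with $T\in\bi{[l]}{t}$ and $v\in[q]^t$, and splits $\ma{C}$ into $\ma{C}_0$ (codewords with an own $(t-1)$-subsequence) and $\ma{C}_1$. A codeword $x\in\ma{C}_0$ is charged the $(l-t+1)q$ pairs $(T,v)$ extending one of its own $(t-1)$-subsequences by an arbitrary value in an arbitrary new coordinate; almost all of these pairs have $n(T,v)=0$, which is precisely the part of the ground set that your count $|\{(T,v):n(T,v)=1\}|$ discards. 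A codeword in $\ma{C}_1$ is charged its $\ge\bi{l}{t}-M$ own $t$-subsequences via Lemma 4.1. Since each charge is at least $\min\{(l-t+1)q,\ \bi{l}{t}-M\}=\bi{l}{t}-M$ under the stated hypothesis on $q$, and the charged families are argued to be pairwise disjoint, the bound $(\bi{l}{t}-M)|\ma{C}|\le\bi{l}{t}q^t$ follows. To repair your argument you must introduce this two-tier accounting; the uniform pointwise route is not salvageable.
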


It is easy to see that when $\lambda=0$, we have $m(l,\lc l/c \rc,0)=0$ and $l\equiv 1\pmod{c}$. In this case, \cref{pro:upper-bd} indicates that $M_{c,l}(q)\le q^{\lc l/c \rc}$ for sufficiently large $q$. Note that this gives an alternative and more concise proof for the main result of \cite{van2014tight}.

\subsection{Outline of the paper}
\noindent The rest of this paper is organized as follows. In \cref{sec:pre} we will present the notation and lemmas that are required for our proofs. In \cref{subsec:con} we will introduce the connection between frameproof codes and cover-free multipartite hypergraphs. In \cref{subsec:packing} we will present a result in hypergraph packings, which is the main technical lemma of this paper (see \cref{thm:hyper-pac} below). The proofs of this lemma will be postponed to \cref{sec:lem}.
The main results of this paper, \cref{thm:lower-bd} and \cref{pro:upper-bd}, will be proved in \cref{sec:low,sec:upp}, respectively. Finally, we will conclude this paper in \cref{sec:con}.

\section{Preliminaries}\label{sec:pre}

\subsection{Codes and hypergraphs}\label{subsec:con}

\noindent For a positive integer $l$, let $[l]:=\{1,\ldots,l\}$ denote the set of the first $l$ positive integers. For a finite set $X$ with $|X|\ge t$, let $\bi{X}{t}$ denote the family of all $t$-subsets of $X$, namely $\bi{X}{t}:=\{S\s X: |S|=t\}$. For a vector $x\in [q]^l$ and a subset $S\s [l]$, let $x_S:=(x_i:i\in S)\in [q]^{|S|}$ denote the subsequence of $x$ with coordinates indexed by $S$.

For $x\in \ma{C}$ and $S\in\bi{[l]}{t}$, $x_S$ is called an {\it own $t$-subsequence} of $x$ with respect to $\ma{C}$ if for every $y\in\ma{C}\setminus\{x\}$, $y_S\neq x_S$; otherwise, if there is $y\in\ma{C}\setminus\{x\}$ with $y_S=x_S$, then $x_S$ is called a {\it non-own $t$-subsequence} of $x$. A code $\ma{C}$ has no own $t$-subsequence if and only if for every $x\in \ma{C}$ and $S\in\bi{[l]}{t}$, $x_S$ is a non-own $t$-subsequence of $x$.

A {\it hypergraph} $\ma{H}$ is an ordered pair $\ma{H}=(V(\ma{H}),E(\ma{H}))$, where the {\it vertex set} $V(\ma{H})$ is a finite set, and the {\it edge set} $E(\ma{H})$ is a family of distinct subsets of $V(\ma{H})$. A hypergraph is called {\it $l$-uniform} if for every $A\in E(\ma{H})$,  $|A|=l$. A hypergraph is called {\it $l$-partite} if its vertex set $V(\ma{H})$ admits a partition $V(\ma{H})=\cup_{i=1}^l V_i$ such that every edge has at most one intersection with each $V_i$; i.e., for every $A\in E(\ma{H})$ and $i\in[l]$, we have $|A\cap V_i|\le 1$. Note that any hypergraph defined on $l$ vertices is automatically $l$-partite.

\begin{definition}[Complete multi-partite hypergraphs]\label{def:com-mul-hyp}
For all positive integers $q,l$, let $\ma{H}_l(q)$ denote the complete $l$-partite $l$-uniform hypergraph with equal part size $q$, where we assume without loss of generality that $V(\ma{H}_l(q))$ admits a $l$-partition   $V(\ma{H}_l(q))=\cup_{i=1}^l V_i$ and for each $i\in[l]$, $V_i=\{(i,a):a\in [q]\}$. Then
$$E(\ma{H}_l(q))=\big\{\{(1,a_1),\ldots,(l,a_l)\}: a_1,\ldots,a_l\in[q]\big\}.$$ Clearly, $|V(\ma{H}_l(q))|=lq$ and $|E(\ma{H}_l(q))|=q^l$.
\end{definition}

The next lemma connects codes to multi-partite hypergraphs.

\begin{lemma}\label{lem:1-2-1}
  For all positive integers $q,l$, there is a one-to-one correspondence between the family of $q$-ary codes of length $l$ and the family of subhypergraphs of $\ma{H}_l(q)$.
\end{lemma}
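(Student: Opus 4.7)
The plan is to exhibit an explicit bijection $\Phi$ from the family of $q$-ary codes of length $l$ to the family of subhypergraphs of $\ma{H}_l(q)$, and to check that it is well-defined and invertible.

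First I would fix the natural convention that a subhypergraph of $\ma{H}_l(q)$ is a hypergraph of the form $(V(\ma{H}_l(q)),E')$ with $E'\s E(\ma{H}_l(q))$; equivalently, it is determined by its edge set. Given a code $\ma{C}\s[q]^l$, I define $\Phi(\ma{C})$ to be the subhypergraph with edge set
$$E(\Phi(\ma{C}))=\big\{\{(1,x_1),(2,x_2),\ldots,(l,x_l)\}:(x_1,\ldots,x_l)\in\ma{C}\big\}.$$
By \cref{def:com-mul-hyp}, each vector in $\ma{C}$ yields a genuine edge of $\ma{H}_l(q)$, and distinct vectors yield distinct edges since the vertex $(i,x_i)$ is uniquely determined by the $i$-th coordinate $x_i$. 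Hence $\Phi$ is injective.

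For surjectivity, I would construct the inverse $\Psi$ as follows. Let $\ma{H}'$ be any subhypergraph of $\ma{H}_l(q)$. Every edge $A\in E(\ma{H}')$ satisfies $|A|=l$ and $|A\cap V_i|\le 1$ for each $i\in[l]$ by the $l$-uniform and $l$-partite structure of $\ma{H}_l(q)$; since $V(\ma{H}_l(q))=\cup_{i=1}^lV_i$ is a partition of an $l$-element-per-part vertex set, a pigeonhole forces $|A\cap V_i|=1$ for every $i$. Thus $A$ can be written uniquely as $\{(1,a^A_1),\ldots,(l,a^A_l)\}$ for some $a^A_1,\ldots,a^A_l\in[q]$, and I set $\Psi(\ma{H}')=\{(a^A_1,\ldots,a^A_l):A\in E(\ma{H}')\}\s[q]^l$. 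A direct verification then shows that $\Phi\circ\Psi$ and $\Psi\circ\Phi$ are the identity maps on their respective domains.

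I do not expect a genuine obstacle: the statement is little more than a reformulation of the observation that an edge of the complete $l$-partite $l$-uniform hypergraph $\ma{H}_l(q)$ encodes precisely one $q$-ary vector of length $l$. The only point deserving attention is making the convention on subhypergraphs explicit (fixed vertex set $V(\ma{H}_l(q))$ versus an induced vertex subset); once chosen, the bijection follows immediately from \cref{def:com-mul-hyp}.
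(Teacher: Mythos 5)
Your proposal is correct and follows essentially the same route as the paper: both define the map sending $x=(x_1,\ldots,x_l)$ to the edge $\{(1,x_1),\ldots,(l,x_l)\}$ and observe that it is a bijection (the paper simply calls this map $\pi$ and leaves the verification implicit, whereas you spell out injectivity and the inverse). No gaps.
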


\begin{proof}
Let $\pi:[q]^l\rightarrow E(\ma{H}_l(q))$ be a mapping defined as for each vector $x=(x_1,\ldots,x_l)\in [q]^l$,
$$\pi(x):=\{(1,x_1),\ldots,(l,x_l)\}\in E(\ma{H}_l(q)).$$
According to $\pi$, every code $\ma{C}\s [q]^l$ is mapped to a subhypergraph $\pi(\ma{C})$ of $\ma{H}_l(q)$, with $V(\pi(\ma{C}))=V(\ma{H}_l(q))$ and $E(\pi(\ma{C}))=\{\pi(x):x\in\ma{C}\}.$
It is not hard to check that $\pi$ is a bijection, which gives us the promised one-to-one correspondence.
\end{proof}

Similarly to the definition of subsequences, one can define the own $t$-subsets and the non-own $t$-subsets of a $l$-uniform hypergraph. Specifically, for a $l$-uniform hypergraph $\ma{H}$, an edge $A\in E(\ma{H})$, and a subset $T\in\bi{A}{t}$, $T$ is called an {\it own $t$-subset} of $A$ if for every $B\in E(\ma{H})\setminus\{A\}$, $T\nsubseteq B$; otherwise, $T$ is called a {\it non-own $t$-subset} of $A$. A hypergraph $\ma{H}$ has no own $t$-subset if and only if for every $A\in E(\ma{H})$ and $T\in\bi{A}{t}$, $T$ is a non-own $t$-subset of $A$.

It is not hard to verify the following lemma.

\begin{lemma}\label{fact-1}
A code $\ma{C}$ has no own $t$-subsequence if and only if the hypergraph $\pi(\ma{C})$ has no own $t$-subset.
\end{lemma}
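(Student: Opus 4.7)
The plan is to exploit the coordinate-preserving nature of the bijection $\pi$ from \cref{lem:1-2-1}. For any $x\in[q]^l$ and any $S\in\bi{[l]}{t}$, associate to the subsequence $x_S$ the $t$-subset $T_S(x):=\{(i,x_i):i\in S\}\s \pi(x)$. Since $\pi(x)$ contains exactly one vertex from each part $V_i$, the assignment $S\mapsto T_S(x)$ is a bijection between $\bi{[l]}{t}$ and $\bi{\pi(x)}{t}$, so every $t$-subset of $\pi(x)$ arises uniquely in this way.

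The key step I would then verify is the following equivalence: for any two codewords $x,y\in\ma{C}$ and any $S\in\bi{[l]}{t}$, one has $x_S=y_S$ if and only if $T_S(x)\s\pi(y)$. The forward direction is immediate from the definition of $T_S(x)$ and $\pi(y)$. For the reverse direction, if $T_S(x)\s\pi(y)=\{(j,y_j):j\in[l]\}$, then for each $i\in S$ the element $(i,x_i)$ must equal the unique element of $\pi(y)$ whose first coordinate is $i$, namely $(i,y_i)$; hence $x_i=y_i$ for all $i\in S$. Combined with the fact that $\pi$ is a bijection (so $y\neq x$ iff $\pi(y)\neq\pi(x)$), this means that the condition ``some $y\in\ma{C}\setminus\{x\}$ has $y_S=x_S$'' is identical to the condition ``some $B\in E(\pi(\ma{C}))\setminus\{\pi(x)\}$ contains $T_S(x)$''.

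Putting the two pieces together, $x_S$ is a non-own $t$-subsequence of $x$ with respect to $\ma{C}$ if and only if $T_S(x)$ is a non-own $t$-subset of $\pi(x)$ with respect to $\pi(\ma{C})$. Since $S\mapsto T_S(x)$ exhausts all $t$-subsets of $\pi(x)$ and $x\mapsto\pi(x)$ exhausts all edges of $\pi(\ma{C})$, quantifying over all $x\in\ma{C}$ and all $S\in\bi{[l]}{t}$ yields the claimed biconditional.

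I do not anticipate any genuine obstacle: the lemma is essentially a translation between the vector language of codes and the hypergraph language induced by $\pi$, so the ``proof'' amounts to careful bookkeeping of the correspondence $S\leftrightarrow T_S(x)$ and an explicit check that equality of subsequences corresponds to containment of the associated $t$-subset. The only point requiring a moment's care is the uniqueness of the vertex of $\pi(y)$ with a given first coordinate, which is what makes the reverse direction of the key equivalence work.
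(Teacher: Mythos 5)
Your proof is correct and follows the same route as the paper: your $T_S(x)$ is exactly the paper's restriction $\pi_S(x):=\{(i,x_i):i\in S\}$, and the paper likewise reduces the lemma to the observation that $x_S$ is an own $t$-subsequence if and only if $\pi_S(x)$ is an own $t$-subset. You simply spell out the bookkeeping (the bijection $S\mapsto T_S(x)$ and the equivalence $x_S=y_S\iff T_S(x)\s\pi(y)$) that the paper leaves as ``not hard to observe.''
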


\begin{proof}
  With the notation above, for $x\in\ma{C}$ and $S\in\bi{[l]}{t}$ one can define the restriction of $\pi$ to $S$ by $\pi_S(x):=\{(i,x_i):i\in S\}$. Therefore, it is not hard to observe that $x_S$ is an own $t$-subsequence of $x$ with respect to $\ma{C}$ if and only if $\pi_S(x):=\{(i,x_i):i\in S\}$ is an own $t$-subset of $\pi(x)$ with respect to $\pi(\ma{C})$. The lemma is an immediate consequence of this observation.
\end{proof}

A $l$-uniform hypergraph $\ma{H}$ is called {\it $c$-cover-free} by Erd\H{o}s, Frankl and F\"uredi \cite{erdos1985families} if for every set of $c+1$ distinct edges $\{A_0,A_1,\ldots,A_c\}\subseteq E(\ma{H})$, it holds that $A_0\nsubseteq \cup_{i=1}^c A_i$.

The following lemma connects frameproof codes to cover-free multi-partite hypergraphs.

\begin{lemma}\label{lem:connection}
  A code $\ma{C}$ is $c$-frameproof if and only if the hypergraph $\pi(\ma{C})$ is $c$-cover-free.
\end{lemma}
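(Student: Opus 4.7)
The plan is to unpack both definitions through the bijection $\pi$ of \cref{lem:1-2-1} and show that the descendant condition on codewords translates verbatim into the set-cover condition on edges. Fix $c+1$ distinct codewords $x^0,x^1,\ldots,x^c\in\ma{C}$ and set $A_j=\pi(x^j)=\{(1,x^j_1),\ldots,(l,x^j_l)\}$ for $0\le j\le c$. Because $\pi$ is a bijection, the $A_j$ are distinct edges of $\pi(\ma{C})$, and conversely every choice of $c+1$ distinct edges of $\pi(\ma{C})$ arises in this way. So it suffices to prove the pointwise equivalence
$$x^0\in desc(\{x^1,\ldots,x^c\})\iff A_0\s\bigcup_{j=1}^c A_j,$$
from which the lemma follows by negating both sides.

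For the forward direction, if $x^0\in desc(\{x^1,\ldots,x^c\})$ then for every $i\in[l]$ there is some $j\in[c]$ with $x^0_i=x^j_i$, and hence the element $(i,x^0_i)\in A_0$ coincides with $(i,x^j_i)\in A_j$; as $i$ ranges over $[l]$ this exhausts $A_0$. The reverse direction is where the multi-partite structure of $\ma{H}_l(q)$ is used: if $A_0\s\bigcup_{j=1}^c A_j$, then for each $i\in[l]$ the element $(i,x^0_i)$ equals some $(i',x^j_{i'})$ with $j\in[c]$ and $i'\in[l]$, and comparing first coordinates forces $i'=i$, so $x^j_i=x^0_i$. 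Thus every coordinate of $x^0$ is matched by some $x^j$, which is precisely the descendant condition.

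Putting the two equivalences together, $\ma{C}$ fails to be $c$-frameproof iff there exist $c+1$ distinct codewords with $x^0\in desc(\{x^1,\ldots,x^c\})$, iff the corresponding $c+1$ distinct edges satisfy $A_0\s\bigcup_{j=1}^c A_j$, iff $\pi(\ma{C})$ fails to be $c$-cover-free. There is no real obstacle in this proof; it is a direct definition chase, and the only subtlety is the appeal to the $l$-partite structure of $\ma{H}_l(q)$, which guarantees that the coordinate index $i$ is preserved when an element of $A_0$ is located inside some $A_j$.
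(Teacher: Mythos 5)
Your proof is correct and follows essentially the same route as the paper's: both reduce the lemma to the pointwise equivalence $x^0\in desc(\{x^1,\ldots,x^c\})\iff \pi(x^0)\s\bigcup_{j=1}^c\pi(x^j)$ and then negate. You merely spell out the two directions (including the use of the $l$-partite structure to recover the coordinate index) that the paper leaves implicit.
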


\begin{proof}
  A code $\ma{C}$ is $c$-frameproof if and only if for all distinct codewords $x^0,x^1,\ldots,x^{c+1}\in\ma{C}$, $x^0\not\in desc(x^1,\ldots,x^{c})$, and this happens if and only if $\pi(x^0)\nsubseteq \cup_{i=1}^c \pi(x^i)$.
\end{proof}

Note that in the remaining part of this paper, we will use $x\in[q]^l$ and $\pi(x)\in E(\ma{H}_l(q))$ (resp. $\mathcal{C}\subseteq [q]^l$ and $\pi(\mathcal{C})\subseteq E(\ma{H}_l(q))$) interchangeably.

\subsection{Hypergraph packings}\label{subsec:packing}

\noindent For a fixed $l$-uniform hypergraph $\ma{F}$ and a {\it host} $l$-uniform hypergraph $\ma{H}$, an {\it $\ma{F}$-packing} in $\ma{H}$ is a family of edge disjoint copies of $\ma{F}$ in $\ma{H}$, as formally defined below.

\begin{definition}[Hypergraph packing]\label{def:hyp-pac}
  A family of $m$ $l$-uniform hypergraphs
  \begin{equation}\label{eq:hyp-pac}
    \{\ma{F}_1=(V(\ma{F}_1),E(\ma{F}_1)),\ldots,\ma{F}_m=(V(\ma{F}_m),E(\ma{F}_m))\}
  \end{equation}
forms a $\ma{F}$-packing in $\ma{H}$ if for each $j\in[m]$,
\begin{enumerate}
  \item $V(\ma{F}_j)\s V(\ma{H})$, $E(\ma{F}_j)\s E(\ma{H})$;
  \item $\ma{F}_j$ is a copy of $\ma{F}$;
  \item $E(\ma{F}_j)\cap E(\ma{F}_{j'})=\emptyset$ for distinct $j,j'\in [m]$.
\end{enumerate}
\end{definition}

Suppose $\ma{F}$ and $\ma{H}$ are both $l$-partite, say, $V(\ma{F})$ and $V(\ma{H})$ have $l$-partitions $V(\ma{F})=\cup_{i=1}^l W_i$ and $V(\ma{H})=\cup_{i=1}^l V_i$, respectively. In this case, a copy $\ma{F}'$ of $\ma{F}$ in $\ma{H}$ is called {\it faithful} (with respect to the partitions above) if for each $i\in[l]$, the copy of $W_i$ in $V(\ma{F}')$ is contained in $V_i$. For $l$-partite hypergraphs $\ma{F}$ and $\ma{H}$ with given $l$-partitions of $V(\mathcal{F})$ and $V(\mathcal{H})$, the $\ma{F}$-packing in \eqref{eq:hyp-pac} is said to be {\it faithful} if it further satisfies for every $j\in [m]$, $\ma{F}_j$ is a faithful copy of $\ma{F}$.


For a $l$-uniform hypergraph $\ma{H}$ and $t\in[l]$, let $\ma{H}^{(t)}$ be the $t$-uniform hypergraph defined as $V(\ma{H}^{(t)})=V(\ma{H})$ and
\begin{equation}\label{eq:H^t}
  E(\ma{H}^{(t)})=\bigcup_{A\in E(\ma{H})}\bi{A}{t}.
\end{equation}
As an example, $\ma{H}_l^{(t)}(q)$ is the complete $l$-partite $t$-uniform hypergraph with $\bi{l}{t}q^t$ edges. With the notation of \cref{def:com-mul-hyp}, $$E(\ma{H}_l^{(t)}(q))=\big\{\{(i_1,a_{i_1}),\ldots,(i_t,a_{i_t})\}: 1\le i_1<\ldots <i_t\le l,~ a_{i_1},\ldots,a_{i_t}\in[q]\big\}.$$

To prove \cref{thm:lower-bd}, we will need a slightly more sophisticated notion of hypergraph packing, which we call an {\it induced hypergraph packing}. Note that this notion was originally introduced in \cite{frankl1987colored}.

\begin{definition}[Induced hypergraph packing]\label{def:indu-packing}
    Let $\mathcal{F}$ and $\mathcal{H}$ be $t$-uniform hypergraphs. An $\mathcal{F}$-packing
    $$\{\ma{F}_1=(V(\ma{F}_1),E(\ma{F}_1)),\ldots,\ma{F}_m=(V(\ma{F}_m),E(\ma{F}_m))\}$$ in $\mathcal{H}$ is called an induced $\mathcal{F}$-packing if it satisfies
    \begin{enumerate}
        \item $|V(\mathcal{F}_i)\cap V(\mathcal{F}_j)|\le t$ for $i\neq j$.
        \item For $i\neq j$, let $A=V(\mathcal{F}_i)\cap V(\mathcal{F}_j)$. If $|A|=t$, then $A\notin E(\mathcal{F}_i)$ and $A\notin E(\mathcal{F}_j)$.
    \end{enumerate}
\end{definition}
\noindent Similarly, a induced $\mathcal{F}$-packing is called a {\it faithful induced $\mathcal{F}$-packing} if it is also faithful.

It is clear by definition that every family of $\mathcal{F}$-packing in $\mathcal{H}_l^{(t)}(q)$ contains at most $\frac{\bi{l}{t}q^t}{|E(\ma{F})|}$ edge disjoint copies of $\mathcal{F}$. The next lemma, which is the main technical lemma in this paper, shows that $\mathcal{H}_l^{(t)}(q)$ has a faithful induced $\mathcal{F}$-packing with near optimal size.

\begin{lemma}\label{thm:hyper-pac}
  For a real $\ep>0$ and a $t$-uniform hypergraph $\ma{F}=([l],E(\ma{F}))$, there exists an integer $q_0=q_0(t,l,\ep)$ such that for any $q>q_0$, $\ma{H}_l^{(t)}(q)$ has a faithful induced $\ma{F}$-packing with cardinality at least $(1-\ep)\fr{\bi{l}{t}q^t}{|E(\ma{F})|}$.
\end{lemma}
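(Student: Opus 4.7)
The plan is to prove this lemma by encoding faithful induced $\ma{F}$-packings of $\ma{H}_l^{(t)}(q)$ as matchings in a suitable auxiliary hypergraph $\ma{G}$, and then applying the Frankl--R\"{o}dl--Pippenger theorem on near-perfect matchings in (nearly) regular hypergraphs with small codegrees, as foreshadowed in the discussion preceding \cref{thm:hyper-pac}. I would take $V(\ma{G}) = V_1 \cup V_2$, where $V_1 := E(\ma{H}_l^{(t)}(q))$ is the edge set of the host, and $V_2$ is the set of all pairs $(J,\psi)$ with $J \in \bi{[l]}{t+1}$, $\psi(j) \in V_j$ for each $j \in J$, and $\bi{J}{t}\cap E(\ma{F}) = \emptyset$ (i.e.\ no $t$-subset of $J$ is an edge of $\ma{F}$). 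For each faithful embedding $\phi$ of $\ma{F}$ into $\ma{H}_l^{(t)}(q)$, I would include the hyperedge
$$\{\phi(A):A\in E(\ma{F})\} \;\cup\; \{(J,\phi|_J) : J \in \bi{[l]}{t+1},\ \bi{J}{t}\cap E(\ma{F})=\emptyset\}.$$
One readily checks that matchings in $\ma{G}$ are exactly faithful induced $\ma{F}$-packings: $V_1$-disjointness is edge-disjointness, which already rules out any two copies sharing a $(t+1)$-index set that contains some edge of $\ma{F}$, while $V_2$-disjointness rules out sharing a $(t+1)$-index set that contains no edge, jointly forcing $|V(\ma{F}_i)\cap V(\ma{F}_j)|\le t$.

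By the part-wise symmetry of $\ma{H}_l^{(t)}(q)$, all $V_1$-vertices share a common degree $D_1 = \Theta(q^{l-t})$ in $\ma{G}$, and all $V_2$-vertices share a common degree $D_2 = \Theta(q^{l-t-1})$. The restriction $\bi{J}{t}\cap E(\ma{F}) = \emptyset$ forces any $V_1$-vertex $e$ (with index $T_e \in E(\ma{F})$) and any $V_2$-vertex $(J,\psi)$ sharing a common hyperedge to satisfy $T_e\not\subseteq J$ and hence $|T_e\cup J|\ge t+2$. Combined with analogous estimates for $V_1$-$V_1$ and $V_2$-$V_2$ pairs, this bounds every pair-degree by $O(q^{l-t-2})$, which is $o(D_2)$ as $q\to\infty$.

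The main obstacle I anticipate is the $\Theta(q)$ discrepancy between $D_1$ and $D_2$, which precludes a direct black-box invocation of Frankl--R\"{o}dl--Pippenger. I would resolve it by replicating each $V_2$-vertex $\lceil D_1/D_2 \rceil = \Theta(q)$ times, giving an equivalent (for matching purposes) nearly-regular multi-hypergraph to which the theorem applies; the resulting matching has size $(1-o(1))\bi{l}{t}q^t/|E(\ma{F})|$, which exceeds $(1-\ep)\fr{\bi{l}{t}q^t}{|E(\ma{F})|}$ once $q > q_0(t,l,\ep)$. A conceptually simpler alternative is a two-step strategy: first apply Frankl--R\"{o}dl--Pippenger to the $V_1$-only subhypergraph (already regular with small codegrees) to obtain a near-perfect edge-disjoint $\ma{F}$-packing; then, using a first-moment bound, show that only $o(q^t)$ pairs of copies in it can share more than $t$ vertices, and delete one copy from each such bad pair, preserving the size up to lower-order terms.
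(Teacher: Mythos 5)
Your encoding of induced packings as matchings in a two-sorted auxiliary hypergraph is a natural instinct, but there are two gaps that I do not think are repairable along the lines you sketch. First, your copies of $\ma{F}$ are index-preserving embeddings (vertex $j$ of $\ma{F}$ is sent into part $V_j$; this is forced by your $V_2$-gadget, whose condition $\bi{J}{t}\cap E(\ma{F})=\emptyset$ reflects the edge structure of a copy only when the embedding respects the labelling of $[l]$). Every such copy uses only host edges whose index set lies in $E(\ma{F})$, and there are just $|E(\ma{F})|q^t$ of those, so your packing can never exceed $q^t$ copies --- short of the target $(1-\ep)\bi{l}{t}q^t/|E(\ma{F})|$ by exactly the factor $\bi{l}{t}/|E(\ma{F})|$ that the application in \cref{thm:lower-bd} needs. (Relatedly, a $V_1$-vertex whose index set is not in $E(\ma{F})$ has degree $0$, not $\Theta(q^{l-t})$.) To reach the stated bound one must place copies of $\ma{F}$ on transversals via arbitrary bijections $[l]\to U$, and then the induced condition for a pair of copies depends on both bijections and is no longer captured by your $V_2$-vertices.

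Second, the degree imbalance is not a removable technicality: whenever $V_2\neq\emptyset$ one has $|V_2|=\Theta(q^{t+1})$, while every matching in $\ma{G}$ has $O(q^t)$ hyperedges each meeting $O(1)$ vertices of $V_2$, so no matching covers more than an $O(1/q)$ fraction of $V_2$; the \emph{conclusion} of \cref{lem:pippenger}, not just its hypotheses, is unattainable for $\ma{G}$. Replication does not fix this: putting all $\Theta(q)$ clones of a $V_2$-vertex into each incident hyperedge makes the uniformity grow with $q$ (the constants in \cref{lem:pippenger} depend on the uniformity), while distributing the incident hyperedges among the clones destroys the equivalence between matchings and induced packings. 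Note also that $V_1$--$V_1$ codegrees can be as large as $\Theta(q^{l-t-1})=\Theta(D_2)$, so condition (iii) of \cref{lem:pippenger} would fail at scale $D_2$ even after a hypothetical regularization. Your two-step alternative founders at its second step: Pippenger's theorem only asserts that \emph{some} near-perfect edge-disjoint packing exists, so there is no probability space over which to run a first-moment bound on overlapping pairs, and when $\ma{F}$ has $(t+1)$-sets spanning no edge a constant fraction of the copies could a priori pairwise share $t+1$ vertices. The paper's proof avoids both problems with two extra ingredients: a pre-packing $\ma{P}$ of transversals with pairwise intersections of size at most $t$ (obtained from \cref{lem:pippenger-app} at uniformity $t+1$), which enforces condition 1 of \cref{def:indu-packing} outright and drives all codegrees of the final auxiliary hypergraph down to $1$; and a random sparsification $\ma{R}$ of the $t$-edges, so that the copy carried by a transversal $U$ is canonically $\bi{U}{t}\cap\ma{R}$ --- this makes condition 2 a consequence of plain vertex-disjointness in the auxiliary hypergraph and lets the copies exploit all $\bi{l}{t}q^t$ host edges rather than only those indexed by $E(\ma{F})$.
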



The proof of \cref{thm:hyper-pac} will be postponed to \cref{sec:lem}.

\section{Proof of \cref{thm:lower-bd}}\label{sec:low}

\noindent Let $\ma{A}$ be a largest $\lc l/c \rc$-uniform hypergraph on the vertex set $[l]$ that does not contain $\la+1$ pairwise disjoint edges. Then by \cref{def:ext-set}, $|E(\ma{A})|=m(l,\lc l/c \rc,\la)$. Let $\ma{F}=([l],E(\ma{F}))$ be the $\lc l/c \rc$-uniform hypergraph with $E(\ma{F})=\bi{[l]}{\lc l/c \rc}\setminus E(\ma{A})$. It is clear that $|E(\ma{F})|=\bi{l}{\lc l/c \rc}-m(l,\lc l/c \rc,\la)$. Note that we will view $\ma{F}$ as a $l$-partite $\lc l/c \rc$-uniform hypergraph.

According to \cref{thm:hyper-pac}, for $q>q_0(\lc l/c \rc,l,\ep)$, $\ma{H}_l^{\lc l/c \rc}(q)$ has a faithful induced $\ma{F}$-packing $$\ma{F}_1=(U_1,E(\ma{F}_1)),\ldots,\ma{F}_m=(U_m,E(\ma{F}_m)),$$
where $|U_i|=l$ for $1\le i\le m$ and $m\ge(1-\ep)\fr{\bi{l}{\lc l/c \rc}q^{\lc l/c \rc}}{|E(\ma{F})|}$. Since the above $\ma{F}$-packing is faithful, one can assume that $U_i\s V(\ma{H}_l(q))=\cup_{j=1}^l V_j$ and $|U_i\cap V_j|=1$ for all $i\in[m],j\in[l]$, where $V_j=\{(j,a):a\in[q]\}$.

Let $\ma{C}=(V(\ma{H}_l(q)),E(\ma{C}))$ be the $l$-partite $l$-uniform hypergraph with $E(\ma{C})=\{U_1,\ldots,U_m\}$. By \cref{lem:1-2-1}, $\ma{C}$ can be equivalently viewed as a $q$-ary code of length $l$. To conclude the proof, it is enough to show that $\ma{C}$ is $c$-frameproof. By \cref{lem:connection} it suffices to show that $\ma{C}$ is $c$-cover-free.
Suppose that this is false and assume without loss of generality that $\{U_1,\ldots,U_{c+1}\}$ violates the $c$-frameproof (or equivalently, $c$-cover-free) property, say, $U_{c+1}\s\cup_{i=1}^c U_i$.
Since the $\mathcal{F}$-packing is induced, by \cref{def:indu-packing} 1) we have that for every $1\le i\le c$, $|U_{c+1}\cap U_i|\le \lc l/c \rc$.
Since $|U_{c+1}|=l=c(\lc l/c \rc-1)+\la+1$, it is not hard to check that there must exist distinct $U_{j_1},\ldots,U_{j_{\la+1}}\in\{U_1,\ldots,U_c\}$ such that $|U_{c+1}\cap U_{j_i}|=\lc l/c \rc$ for every $1\le i\le \la+1$. Moreover, the $\la+1$ $\lc l/c \rc$-subsets $U_{c+1}\cap U_{j_i}$ are pairwise disjoint. It follows by \cref{def:indu-packing} 2) that $U_{c+1}\cap U_{j_i}\not\in E(\ma{F}_{c+1})$ for each $1\le i\le \la+1$. Consequently, the $\lc l/c \rc$-uniform hypergraph $\ma{A}_{c+1}=(U_{c+1},\binom{U_{c+1}}{\lc l/c \rc}\setminus E(\ma{F}_{c+1}))$, which is clearly a copy of $\ma{A}$, contains $\la+1$ pairwise disjoint edges $U_{c+1}\cap U_{j_i}$, which is an obvious contradiction to the definition of $\ma{A}$. Thus, the proof of the theorem is complete.

\section{Proof of \cref{pro:upper-bd}}\label{sec:upp}
\noindent Let us first prove the following lemma.

\begin{lemma}\label{lem:upper-of-own-subsets}
  Let $\lambda$ be defined in \cref{pro:upper-bd}. If a $(c,l,q)$-frameproof code has no own $(\lc l/c \rc-1)$-subsequence, then every codeword of it must contain at least $\bi{l}{\lc l/c \rc}-m(l,\lc l/c \rc,\la)$ own $\lc l/c \rc$-subsequences.
\end{lemma}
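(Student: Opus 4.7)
Fix an arbitrary codeword $x\in\ma{C}$ and write $t=\lc l/c\rc$. My plan is to identify the non-own $t$-subsequences of $x$ with the edges of a $t$-uniform hypergraph $\ma{N}_x$ on the coordinate set $[l]$. Then by \cref{def:ext-set} it suffices to show that $\ma{N}_x$ contains no $\la+1$ pairwise disjoint edges; this will give $|E(\ma{N}_x)|\le m(l,t,\la)$, and hence $x$ has at least $\bi{l}{t}-m(l,t,\la)$ own $t$-subsequences, which is exactly what the lemma claims.

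I would proceed by contradiction. Suppose $S_1,\ldots,S_{\la+1}\in E(\ma{N}_x)$ are pairwise disjoint. The arithmetic pillar of the argument is the relation $l=c(t-1)+\la+1$ used in \cref{thm:blackburn}: the complement $R:=[l]\setminus(S_1\cup\cdots\cup S_{\la+1})$ has size $l-t(\la+1)=(c-\la-1)(t-1)$, a clean multiple of $t-1$. Hence $R$ partitions into $c-\la-1$ blocks $T_1,\ldots,T_{c-\la-1}$, each of size $t-1$. By the definition of non-own $t$-subsequence, for each $i\in[\la+1]$ there is a codeword $x^i\in\ma{C}\setminus\{x\}$ with $x^i_{S_i}=x_{S_i}$; by the hypothesis that $\ma{C}$ has no own $(t-1)$-subsequence, applied to $T_j$, for each $j\in[c-\la-1]$ there is a codeword $y^j\in\ma{C}\setminus\{x\}$ with $y^j_{T_j}=x_{T_j}$. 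Since $\{S_1,\ldots,S_{\la+1},T_1,\ldots,T_{c-\la-1}\}$ partitions $[l]$, every coordinate of $x$ is matched by one of the chosen codewords, so $x\in desc(\{x^1,\ldots,x^{\la+1},y^1,\ldots,y^{c-\la-1}\})$. Let $\ma{Y}$ be the set of distinct codewords appearing in this list; then $|\ma{Y}|\le c$ and $x\notin\ma{Y}$. If $|\ma{Y}|=c$, this directly violates $c$-frameproofness; if $|\ma{Y}|<c$, I would augment $\ma{Y}$ by further codewords of $\ma{C}\setminus(\ma{Y}\cup\{x\})$ and use the monotonicity of $desc$ to reach the same contradiction.

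The main obstacle I anticipate is precisely this final augmentation, which requires $|\ma{C}|\ge c+1$ for enough distinct codewords to exist. I would handle it at the outset by observing that any code of size at most $c$ is vacuously $c$-frameproof and trivially satisfies the bound in \cref{pro:upper-bd} for the relevant range of $q$, so one may assume $|\ma{C}|\ge c+1$ without loss of generality. Once this bookkeeping is dispatched, the remainder of the argument is a direct unpacking of the definitions of frameproof code, non-own subsequence, and the Erd\H{o}s matching number $m(l,t,\la)$; the combinatorial content lies entirely in the exact partition of $[l]$ into $\la+1$ blocks of size $t$ and $c-\la-1$ blocks of size $t-1$ afforded by the identity $l=c(t-1)+\la+1$.
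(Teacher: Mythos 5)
Your proof is correct and is essentially the paper's own argument: extract $\la+1$ pairwise disjoint non-own $\lc l/c\rc$-subsequences via the definition of $m(l,\lc l/c\rc,\la)$, partition the remaining $(c-\la-1)(\lc l/c\rc-1)$ coordinates into blocks of size $\lc l/c\rc-1$, and use the no-own-$(\lc l/c\rc-1)$-subsequence hypothesis to assemble a coalition of at most $c$ codewords whose descendant set contains $x$, contradicting frameproofness. Your extra care about padding the coalition to exactly $c$ distinct codewords addresses a point the paper silently glosses over (it allows "not necessarily distinct" witnesses and invokes the "coalition of at most $c$" reading of frameproofness), so it is a harmless refinement rather than a different route.
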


\begin{proof}
Let $\ma{C}_1$ be a code that satisfies the assumption of the lemma. We proceed to show that any codeword of $\ma{C}_1$ contains at most $m(l,\lc l/c \rc,\la)$ non-own $\lc l/c \rc$-subsequences. Assume by contradiction that there is a codeword $x\in\ma{C}_1$ that contains at least $m:=m(l,\lc l/c \rc,\la)+1$ non-own $\lc l/c \rc$-subsequences, say $x_{T_1},\ldots,x_{T_m}$, where for each $i\in[m]$, $T_i\in\bi{[l]}{\lc l/c \rc}$. As $m$ is strictly larger than $m(l,\lc l/c \rc,\la)$, by \cref{def:ext-set}, the set system $\{T_1,\ldots,T_m\}\subseteq\bi{[l]}{\lc l/c \rc}$ must contain $\la+1$ pairwise disjoint members. Assume without loss of generality that the first $\la+1$ members $T_1,\ldots,T_{\la+1}$ are pairwise disjoint. By the definition of a non-own subsequence, there exist (not necessarily distinct) codewords $y^1,\ldots,y^{\la+1}\in \ma{C}_1\setminus \{x\}$ such that for each $1\le i\le \la+1$, $y^i_{T_i}=x_{T_i}$.

As $l=(\la+1)\lc l/c \rc+(c-\la-1)(\lc l/c \rc-1)$, there exist $c-\la-1$ (possibly $c-\la-1=0$) pairwise disjoint subsets $S_1,\ldots,S_{c-\la-1}\subseteq\bi{[l]}{\lc l/c \rc-1}$ such that the $c$ subsets $$\{T_i,S_j:i\in [\la+1], j\in [c-\la-1]\}$$

\no form a partition of $[l]$. Since by assumption $\ma{C}_1$ has no own $(\lc l/c \rc-1)$-subsequence, there exist (not necessarily distinct) $z^1,\ldots,z^{c-\la-1}\in \ma{C}_1\setminus \{x\}$ such that for each $j\in [c-\la-1]$, $z^j_{S_j}=x_{S_j}$. As a result, $$x\in desc(y^1,\ldots,y^{\la+1},z^1\ldots,z^{c-\la-1}),$$ violating the assumption that $\ma{C}_1$ is $c$-frameproof. This leads to a contradiction. 
\end{proof}

We are now in a position to prove \cref{pro:upper-bd}.

\begin{proof}[Proof of \cref{pro:upper-bd}]
  For a $(c,l,q)$-frameproof code $\ma{C}$, let $\ma{C}_0$ be the set of codewords that have at least one own subsequence with size $\lc l/c \rc-1$, and let $\ma{C}_1=\ma{C}\setminus\ma{C}_0$. By \cref{fact-1}, for each $x\in\ma{C}_0$, $\pi(x)$ has at least one own $(\lc l/c \rc-1)$-subset. For each $x\in\ma{C}_0$, let
  $$\ma{N}_x:=\{T\in E(\ma{H}_l^{(\lc l/c \rc-1)}(q)): \text{$T$ is an own $(\lc l/c \rc-1)$-subset of $\pi(x)$ with respect to $\pi(\mathcal{C})$}\}. $$
  Then $\ma{N}_x\neq\emptyset$ for every $x\in\ma{C}_0$. Let
  $$\ma{A}_x:=\{S\in E(\ma{H}_l^{(\lc l/c \rc)}(q)): \exists~ T\in \ma{N}_x \text{ such that } T\subseteq S\}.$$
  It is easy to see that $\ma{A}_x$ is the family of $\lc l/c \rc$-subsets each of which contains at least one own $(\lc l/c \rc-1)$-subset of $\pi(x)$. As every $T\in E(\ma{H}_l^{(\lc l/c \rc-1)}(q))$ is contained in exactly $(l-\lc l/c \rc+1)q$ edges in $E(\ma{H}_l^{(\lc l/c \rc)}(q))$ and $\ma{N}_x\neq\emptyset$, for every $x\in\ma{C}_0$,
  \begin{equation}\label{eq:A_x}
     |\ma{A}_x|\ge(l-\lc l/c \rc+1)q.
  \end{equation}

  \noindent Next, for each $y\in\ma{C}_1$, let $\ma{B}_y\s \ma{H}_l^{(\lc l/c \rc)}(q)$ be the family of own $\lc l/c \rc$-subsets of $y$. Then, by \cref{lem:upper-of-own-subsets} we have
  \begin{equation}\label{eq:B_x}
     |\ma{B}_y|\ge\bi{l}{\lc l/c \rc}-m(l,\lc l/c \rc,\la).
  \end{equation}

  It is not hard to check by the definition of own subsets that the $|\ma{C}|$ families $\ma{A}_x,\ma{B}_y, x\in\ma{C}_0,y\in\ma{C}_1$ are pairwise disjoint, as subfamilies of $E(\ma{H}_l^{(\lc l/c \rc)}(q))$. Consequently, by \eqref{eq:A_x} and \eqref{eq:B_x} we have that for large enough $q$,
  \begin{equation*}
    \begin{aligned}
      |E(\ma{H}_l^{(\lc l/c \rc)}(q))|&=\bi{l}{\lc l/c \rc}q^{\lc l/c \rc}
      \ge\sum_{x\in\ma{C}_0}|\ma{A}_x|+\sum_{y\in\ma{C}_1}|\ma{B}_x|\\
      &\ge (l-\lc l/c \rc+1) q |\ma{C}_0|+\Big(\bi{l}{\lc l/c \rc}-m(l,\lc l/c \rc,\la)\Big)|\ma{C}_1|\\
      &\ge \Big(\bi{l}{\lc l/c \rc}-m(l,\lc l/c \rc,\la)\Big)|\ma{C}|,
    \end{aligned}
  \end{equation*}
  and therefore the proposition holds immediately.
\end{proof}

\section{Proof of \cref{thm:hyper-pac}}\label{sec:lem}

\noindent The main concern of this section is to present the proof of \cref{thm:hyper-pac}. Let us begin with some preparations. For a hypergraph $\ma{H}=(V(\ma{H}),E(\ma{H}))$, a subset $\ma{M}\s E(\ma{H}
)$ is a {\it matching} if for every pair of distinct edges $A,B\in\ma{M}$, we have $A\cap B=\emptyset$.
For $x,y\in V(\ma{H})$, the {\it degree} of $x$, denoted by $d_\ma{H}(x)$, is the number of edges that contain $x$; i.e., $d_\ma{H}(x)=|\{A\in E(\ma{H}): x\in A\}|$. The {\it codegree} of $x,y$, denoted by $d_\ma{H}(x,y)$, is the number of edges that contain both $x$ and $y$; i.e., $d_\ma{H}(x,y)=|\{A\in E(\ma{H}): \{x,y\}\s A\}|$. Let $\Delta(\ma{H}),\delta(\ma{H}),\Delta_2(\ma{H})$ denote the maximum degree, minimum degree, and maximum codegree of $\ma{H}$, respectively. 

We will use a well-known result developed in a series of works of R\"{o}dl \cite{rodl1985packing}, Frankl and R\"{o}dl \cite{frankl1985near}, and Pippenger \cite{pippenger1989}, which states a sufficient condition for which a hypergraph admits a near optimal matching.

\begin{theorem}[see, e.g., Theorem 4.7.1 in \cite{alon2016probabilistic}]\label{lem:pippenger}
    For every integer $t\ge 2$ and reals $k\ge 1$ and $\ep>0$, there are $\gamma=\gamma(t,k,\ep)>0$ and $d_0=d_0(t,k,\ep)$ such that for every $n$ and $D\ge d_0$ the following holds. Every $t$-uniform hypergraph $\mathcal{H}=(V,E(\mathcal{H}))$ with $|V|=n$ which satisfies the following conditions
    \begin{itemize}
        \item [(i)] \label{pip(1)} for all but at most $\gamma n$ vertices $x\in V$, $|d_\ma{H}(x)-D|\le \gamma D$;
        \item [(ii)]  $\Delta(\ma{H})<kD$;
        \item [(iii)]  $\Delta_2(\ma{H})<\gamma D$;
    \end{itemize}
     contains a matching that covers all but at most $\ep n$ vertices.
\end{theorem}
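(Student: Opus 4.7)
The plan is to prove the theorem by the semi-random ``nibble'' method of R\"odl, which builds the matching through many rounds of randomized partial selection. In each round we pick a small random collection of edges, discard those that conflict with another picked edge, and commit the rest to the matching under construction; the pseudorandom conditions (i)--(iii) on $\ma{H}$ are then shown to be inherited by the residual hypergraph with only slightly degraded parameters. After enough rounds, almost every vertex has been covered.

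First I would fix a small ``nibble size'' parameter $\delta>0$ (much smaller than the final $\gamma$) and set the selection probability $p = \delta/D$. In a single round, each edge of $\ma{H}$ is chosen independently with probability $p$, giving a random set $M\s E(\ma{H})$; let $M^*\s M$ be the subfamily of edges that do not meet any other edge of $M$. Since $M^*$ is a matching, we append it to the partial matching, remove $V(M^*)$ from $V$, and pass to the induced subhypergraph $\ma{H}'$. A direct computation, using (ii) to bound $\Delta(\ma{H})$ and (iii) to control the rare pairs of edges that meet in two or more vertices, shows that for a typical vertex $v$,
\[
\Pr[\,v\in V(\ma{H}')\,] \;=\; (1-p)^{d_\ma{H}(v)}(1+o(1)) \;\approx\; e^{-\delta},
\]
and that for a typical edge $e\in E(\ma{H})$, the probability that $e$ survives into $\ma{H}'$ is $\approx e^{-(t-1)\delta}$. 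In particular the effective degree parameter $D'$ of $\ma{H}'$ is roughly $e^{-(t-1)\delta}D$, and the ratio $D/|V|^{t-1}$ (as well as $\Delta_2/D$) is essentially preserved.

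The heart of the argument is concentration: we need these expectations to hold almost surely for almost every vertex, so that the residual hypergraph $\ma{H}'$ again satisfies conditions (i)--(iii) with parameters $(D',k,\gamma')$ for a slightly larger $\gamma'$. This is handled by either Talagrand's inequality or a bounded-differences martingale argument: the random variable ``new degree of $v$'' is a function of independent edge-indicators, and conditions (ii)--(iii) bound its Lipschitz constant, yielding a tail bound of the form $\exp(-\Omega(\gamma^2 D))$. This is tight enough to union-bound over all $n$ vertices and preserve the pseudorandomness, provided $D\ge d_0(t,k,\ep)$ is sufficiently large. A symmetric concentration argument handles the codegrees.

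Iterating this for $K=\Theta(\delta^{-1}\log(1/\ep))$ rounds drives the number of uncovered vertices down to at most $\ep n$, and the accumulated matching then covers all but $\ep n$ vertices. The main obstacle is the bookkeeping: one must choose $\delta$ small enough that the degradation of the regularity parameter per round is bounded, and at the same time take $K$ large enough that $e^{-\delta K} \le \ep$. Balancing these constraints, together with ensuring that the codegree condition (iii) does not blow up relative to the new $D'$ over $K$ rounds, is the delicate part; it is exactly this trade-off that forces the $\gamma$ and $d_0$ promised in the theorem to depend on $t$, $k$, and $\ep$ in the stated way.
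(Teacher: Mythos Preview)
The paper does not give its own proof of this statement: Theorem~\ref{lem:pippenger} is quoted as a known result (with a reference to Alon--Spencer, Theorem~4.7.1) and used as a black box in the proofs of Lemmas~\ref{lem:pippenger-app} and~\ref{lem:optimal-matching}. So there is nothing in the paper to compare your proposal against.

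That said, your outline is the standard and correct approach --- the R\"odl nibble as refined by Frankl--R\"odl and Pippenger --- and the shape of the argument (small random bite, discard conflicts, show the residual hypergraph inherits (i)--(iii) with slightly worse constants, iterate $\Theta(\delta^{-1}\log(1/\ep))$ times) is exactly how the cited proof in Alon--Spencer proceeds. One small remark: in the actual proof one typically does not need Talagrand or martingales for the concentration step; because the edge indicators are independent and each vertex degree is a sum of at most $kD$ of them, Chernoff--Hoeffding bounds already give the required $\exp(-\Omega(\gamma^2 D))$ tail, and the codegree condition (iii) is used only to control the second-moment error in the ``survival probability'' computation, not for concentration per se.
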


The following lemma is a direct consequence of \cref{lem:pippenger}.

\begin{lemma}\label{lem:pippenger-app}
Given integers $s$, $l$ and a real $\ep>0$, for any $t\in[l]$,  there exists an integer $q_0$
so that for every integer $q>q_0$ there exists a faithful $\ma{H}_l^{(t)}(s)$-packing
$$\ma{F}_1=(U_1,E(\ma{F}_1)),\ldots,\ma{F}_m=(U_m,E(\ma{F}_m))$$
\no in $\ma{H}_l^{(t)}(q)$ with $m\ge(1-\ep) (\fr{q}{s})^t$, where for each $j\in[m]$, $\ma{F}_j$ is a faithful copy of $\ma{H}_l^{(t)}(s)$ in $\ma{H}_l^{(t)}(q)$.
\end{lemma}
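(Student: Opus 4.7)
My plan is to apply \cref{lem:pippenger} to an auxiliary hypergraph $\ma{G}$ whose vertex set is $V(\ma{G})=E(\ma{H}_l^{(t)}(q))$ and whose hyperedges are the edge sets $E(\ma{F})$ of all faithful copies $\ma{F}$ of $\ma{H}_l^{(t)}(s)$ in $\ma{H}_l^{(t)}(q)$. Every faithful copy is determined by a choice of $s$-subset $S_i\s V_i$ for each $i\in[l]$, so each hyperedge of $\ma{G}$ has exactly $T:=\bi{l}{t}s^t$ vertices, and distinct copies yield distinct edge sets. Two hyperedges of $\ma{G}$ are vertex-disjoint iff the corresponding copies are edge-disjoint in $\ma{H}_l^{(t)}(q)$, so a matching of size $m$ in $\ma{G}$ translates directly into a faithful $\ma{H}_l^{(t)}(s)$-packing of size $m$.

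A direct count shows that every $A\in V(\ma{G})$ has the same degree $D=\bi{q-1}{s-1}^t\bi{q}{s}^{l-t}$, since a copy $(S_1,\ldots,S_l)$ contains $A$ iff the unique vertex of $A$ in each of its $t$ parts lies in the corresponding $S_i$. In particular hypotheses (i) and (ii) of \cref{lem:pippenger} hold trivially, with $k=1$ and any $\gamma>0$.

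The main step is verifying the codegree hypothesis (iii). Given distinct $A,B\in E(\ma{H}_l^{(t)}(q))$ with index sets $I_A,I_B\in\bi{[l]}{t}$, I set $r=|I_A\cap I_B|$, let $a$ count the common indices at which $A$ and $B$ share the same vertex, and set $d=r-a$. Counting copies part-by-part (separating the cases of parts outside $I_A\cup I_B$, parts in the symmetric difference, and parts of agreement vs. disagreement in $I_A\cap I_B$) and applying the identities $\bi{q-1}{s-1}/\bi{q}{s}=s/q$ and $\bi{q-2}{s-2}/\bi{q-1}{s-1}=(s-1)/(q-1)$, the codegree ratio simplifies to
$$\fr{d_{\ma{G}}(A,B)}{D}=\left(\fr{s}{q}\right)^{t-r}\left(\fr{s-1}{q-1}\right)^d.$$
Because $A\neq B$, either $r<t$, or else $r=t$ which forces $a<t$ and hence $d\ge 1$; in either case at least one factor of order $1/q$ appears, so $d_{\ma{G}}(A,B)/D=O(1/q)$ as $q\to\infty$ with $s,l,t$ fixed. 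Consequently $\Delta_2(\ma{G})<\gamma D$ whenever $q$ is large enough in terms of $\gamma$. This is the only obstacle of any substance; everything else is bookkeeping.

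With all three hypotheses of \cref{lem:pippenger} in hand, we obtain a matching $\ma{M}\s E(\ma{G})$ covering all but at most $\ep|V(\ma{G})|$ vertices, provided $q\ge q_0(s,l,\ep)$. Since $|V(\ma{G})|=\bi{l}{t}q^t$ and each hyperedge of $\ma{G}$ has size $T=\bi{l}{t}s^t$, we conclude
$$|\ma{M}|\ge\fr{(1-\ep)\bi{l}{t}q^t}{\bi{l}{t}s^t}=(1-\ep)\left(\fr{q}{s}\right)^t,$$
and translating $\ma{M}$ back into a faithful $\ma{H}_l^{(t)}(s)$-packing in $\ma{H}_l^{(t)}(q)$ completes the proof.
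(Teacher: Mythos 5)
Your proposal is correct and follows essentially the same route as the paper: the same auxiliary hypergraph on vertex set $E(\ma{H}_l^{(t)}(q))$ whose edges are the copies of $\ma{H}_l^{(t)}(s)$, the same regular degree $D=\bi{q-1}{s-1}^t\bi{q}{s}^{l-t}$, and an application of \cref{lem:pippenger}; your codegree computation is just an exact version (via the parameters $r,a,d$) of the paper's cruder bound $d(A,B)\le (s/q)D$. One trivial nit: since $\ma{G}$ is exactly $D$-regular and hypothesis (ii) asks for the strict inequality $\Delta(\ma{G})<kD$, you should take $k=2$ (say) rather than $k=1$.
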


\begin{proof}
    By \eqref{eq:H^t} it is clear that $\ma{H}_l^{(t)}(s)$ is a $l$-partite $t$-uniform hypergraph with $\bi{l}{t}s^t$ edges and $\ma{H}_l^{(t)}(q)$ is a $l$-partite $t$-uniform hypergraph with $\bi{l}{t}q^t$ edges. It is routine to check that, if $\ma{F}$ and $\ma{H}$ are both $l$-partite, and moreover $\ma{F}$ is complete, then every copy of $\ma{F}$ in $\ma{H}$ must be faithful. Therefore, every $\ma{H}_l^{(t)}(s)$-packing in $\ma{H}_l^{(t)}(q)$ is faithful.
    Let $\ma{H}$ be the $\bi{l}{t}s^t$-uniform hypergraph formed by all copies of $\ma{H}_l^{(t)}(s)$ in $\ma{H}_l^{(t)}(q)$. That is, the vertex set $V(\ma{H})$ is the set of the $\bi{l}{t}q^t$ edges of $\ma{H}_l^{(t)}(q)$, and $\bi{l}{t}s^t$ vertices in $V(\ma{H})$ form an edge if and only if they form a (faithful) copy of $\ma{H}_l^{(t)}(s)$.

To apply \cref{lem:pippenger}, we will show that $\mathcal{H}$ fulfills the assumptions of the theorem. Let $D:=\bi{q-1}{s-1}^t \binom{q}{s}^{l-t}$. We claim that for any $x\in V(\ma{H})=E(\ma{H}_l^{(t)}(q))$,
\begin{align*}
    d_\ma{H}(x)=|\{\text{number of copies of $\ma{H}_l^{(t)}(s)$ in $\ma{H}_l^{(t)}(q)$ containing $x$}\}|=D.
\end{align*}
\noindent Indeed, assume that $V(\ma{H}_l^{(t)}(q))=\cup_{i=1}^l V_i$, where the $V_i$'s are given in \cref{def:com-mul-hyp}; given $x\in V(\mathcal{H})$, say $x\in\cup_{i=1}^t V_i$, to form a copy of $\ma{H}_l^{(t)}(s)$, one needs to choose $s-1$ vertices from each $V_i,1\le i\le t$, and $s$ vertices from each $V_j, t+1\le j\le l$, which gives us $D$ choices in total.

We now show that $\Delta_2(\mathcal{H})$ is small. Observe that for any pair of distinct $x,y\in V(\ma{H})$, we have $|x\cup y|\ge t+1$. Similarly to the computation of $d_\ma{H}(x)$, we have that
\begin{align*}
    d_\ma{H}(x,y)\le\bi{q-1}{s-1}^{|x\cup y|} \binom{q}{s}^{l-|x\cup y|}\le\bi{q-1}{s-1}^{t+1} \binom{q}{s}^{l-t-1}=\frac{s}{q}\cdot D.
\end{align*}

\noindent For $\gamma$ that meets the assumption of \cref{lem:pippenger} and sufficiently large $q$, we have $\Delta_2(\mathcal{H})\le\gamma D$. Therefore, it follows by \cref{lem:pippenger} that for sufficiently $q$, $\ma{H}$ has a matching with cardinality at least $(1-\ep)(\fr{q}{s})^t$. As a matching in $\ma{H}$ corresponds to a (faithful) $\ma{H}_l^{(t)}(s)$-packing in $\ma{H}_l^{(t)}(q)$, the lemma follows immediately.
\end{proof}


We will also need the following standard probabilistic inequalities. Note that all random variables considered below are discrete.


\begin{proposition}[see e.g. \cite{alon2016probabilistic}]\label{prop:ineq}~
\begin{itemize}
\item [(i)] Markov inequality: If $X$ is a nonnegative random variable and $a>0$, then
$$
\Pr[X \geq a] \leq \frac{\E[X]}{a} .
$$


\item [(ii)] Hoeffding inequality: Let $X_1, \ldots, X_n$ be independent random variables taking values in $[a,b]$. Consider the sum $X=\sum_{i=1}^n X_i$ and denote the expectation of the sum by $\mu=\E[X]$. Then for any $t>0$,
$$\Pr[|X-\mu|\ge t]\le 2e^{-\frac{2t^2}{n(b-a)^2}}. $$


%
\end{itemize}
\end{proposition}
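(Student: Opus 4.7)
The plan is to handle the two inequalities by standard arguments and to keep the proofs self-contained. For (i), I would use the indicator trick: since $X \ge 0$, write $X \ge X \cdot \mathbf{1}_{X \ge a} \ge a \cdot \mathbf{1}_{X \ge a}$ pointwise, take expectations, and divide by $a > 0$ to obtain $\Pr[X \ge a] \le \E[X]/a$. Equivalently, for discrete $X$ one can write $\E[X] = \sum_v v \Pr[X = v] \ge \sum_{v \ge a} v \Pr[X = v] \ge a \Pr[X \ge a]$.

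For (ii), the plan is the Chernoff exponential-moment method combined with Hoeffding's lemma. Fix $\lambda > 0$ and set $Y_i := X_i - \E X_i$, so that $Y := X - \mu = \sum_{i=1}^n Y_i$, with each $Y_i$ supported on an interval of length at most $b - a$ and mean zero. Applying (i) to the nonnegative random variable $e^{\lambda Y}$ and using independence of the $X_i$'s gives
\[
\Pr[Y \ge t] \;\le\; e^{-\lambda t} \, \E\bigl[e^{\lambda Y}\bigr] \;=\; e^{-\lambda t} \prod_{i=1}^n \E\bigl[e^{\lambda Y_i}\bigr].
\]
The key technical ingredient is Hoeffding's lemma: if $Z$ is a random variable with $\E Z = 0$ supported on an interval of length $L$, then $\E[e^{\lambda Z}] \le e^{\lambda^2 L^2 / 8}$. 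I would prove this by bounding $e^{\lambda z}$ above by the chord through its endpoints on the supporting interval (valid by convexity), then taking expectation to reduce everything to the cumulant function $\phi(\lambda) := \log \E[e^{\lambda Z}]$; the cleanest route is to compute $\phi''(\lambda)$ as a variance of a tilted distribution, bound it by $L^2 / 4$ using the elementary inequality $p(1-p) \le 1/4$, and integrate twice using $\phi(0) = \phi'(0) = 0$.

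Plugging Hoeffding's lemma back in yields $\Pr[X - \mu \ge t] \le \exp\bigl(-\lambda t + n \lambda^2 (b-a)^2 / 8\bigr)$. Optimizing in $\lambda$ (the minimizer is $\lambda = 4t / (n(b-a)^2)$) gives a one-sided bound of $\exp\bigl(-2t^2 / (n(b-a)^2)\bigr)$. Applying the same argument to $-X$, which satisfies the same hypotheses with the same interval length $b - a$, and combining via the union bound $\Pr[|X - \mu| \ge t] \le \Pr[X - \mu \ge t] + \Pr[\mu - X \ge t]$ produces the factor of $2$ in the stated two-sided inequality. The only genuinely nontrivial step is Hoeffding's lemma itself; the Markov bound, the product-of-moment-generating-functions step, and the scalar optimization over $\lambda$ are all routine.
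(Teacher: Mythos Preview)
Your proposal is correct and follows the standard textbook route: the indicator trick for Markov, and the Chernoff--Hoeffding exponential-moment method (Markov on $e^{\lambda Y}$, independence, Hoeffding's lemma with the $L^2/8$ bound via the second-derivative/variance argument, scalar optimization in $\lambda$, then a union bound for the two-sided version). All steps are valid and the sketch is sufficiently detailed.

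The paper, however, gives no proof at all: Proposition~\ref{prop:ineq} is stated with a reference to \cite{alon2016probabilistic} and used as a black box. So there is nothing to compare against beyond noting that your argument is exactly the standard one such a citation points to.
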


Now we are ready to present the proof of \cref{thm:hyper-pac}.

\subsection{Proof of \cref{thm:hyper-pac}}

\noindent Recall that we assume $V(\ma{H}_l^{(t)}(q))=\cup_{i=1}^l V_i$, where the $V_i$'s are given in \cref{def:com-mul-hyp}. By \cref{lem:pippenger-app}, for every $\delta>0$, there exists $q_0$ such that for every $q>q_0$, there exists a faithful $\ma{H}_l^{(t+1)}(1)$-packing $\ma{P}$
in $\ma{H}_l^{(t+1)}(q)$ with $|\ma{P}|\ge(1-\delta) q^{t+1}$. Note that every member in $\mathcal{P}$ is of the form $(U,\binom{U}{t+1})$, where $|U|=l$ and $|U\cap V_i|=1$ for each $i\in[l]$. With slight abuse of notation, we will write $U\in\mathcal{P}$ to denote the copy of $\ma{H}_l^{(t)}(q)$ in $\mathcal{P}$ defined on the vertex set $U$. Then, for distinct $U,U'\in\mathcal{P}$, $|U\cap U'|\le t$.

For some fixed but sufficiently small $\eta>0$, let $\mathcal{R}\subseteq E(\ma{H}_l^{(t)}(q))$ be a random set formed by choosing every element of $E(\ma{H}_l^{(t)}(q))$ independently with probability $1-\eta$. Then $|\ma{R}|$ can be viewed as the sum of a series of independent Bernoulli random variables. As $\E[|\mathcal{R}|]=(1-\eta)|E(\ma{H}_l^{(t)}(q))|=(1-\eta)\binom{l}{t}q^t$, it follows by \cref{prop:ineq} (ii) that
for large enough $q$, with high probability (by which we mean that with probability $1-o(1)$, where $o(1)\rightarrow 0$ as $q\rightarrow\infty$) we have
\begin{equation}\label{Rcardinality}
\left||\mathcal{R}|-(1-\eta)\binom{l}{t}q^t\right|\le\eta\binom{l}{t}q^t.
\end{equation}

\noindent Let $\mathcal{H}$ be an $|E(\ma{F})|$-uniform hypergraph with vertex set $\mathcal{R}$, where $\mathcal{A}:=\{A_1,\ldots,A_{|E(\ma{F})|}\}\subseteq\mathcal{R}$ forms an edge in $\mathcal{H}$ if $\mathcal{A}$ forms a copy of $\mathcal{F}$ and there exists $U\in\mathcal{P}$ such that $\mathcal{A}=\binom{U}{t}\cap\mathcal{R}$.

The proof of \cref{thm:hyper-pac} is divided into two parts. First, the following lemma shows that every matching in $\mathcal{H}$ defines a faithful induced $\mathcal{F}$-packing in $\mathcal{H}_l^{(t)}(q)$. Then, \cref{lem:optimal-matching} below shows that $\mathcal{H}$ has a sufficiently large matching.

\begin{lemma}\label{lem:packing-matching}
Every matching in $\mathcal{H}$ defines a faithful induced $\mathcal{F}$-packing in $\mathcal{H}_l^{(t)}(q)$.
\end{lemma}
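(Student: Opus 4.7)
The plan is as follows. Given a matching $\mathcal{M}=\{\mathcal{A}_1,\ldots,\mathcal{A}_m\}\subseteq E(\mathcal{H})$, I will extract from each $\mathcal{A}_j$ a witnessing $U_j\in\mathcal{P}$ such that $\mathcal{A}_j=\binom{U_j}{t}\cap\mathcal{R}$ and such that $\mathcal{A}_j$ forms a copy of $\mathcal{F}$ on vertex set $U_j$, and then set $\mathcal{F}_j:=(U_j,\mathcal{A}_j)$. The goal is to check that the collection $\{\mathcal{F}_1,\ldots,\mathcal{F}_m\}$ satisfies every clause of Definitions~\ref{def:hyp-pac} and \ref{def:indu-packing}, and is faithful.

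Clauses (1) and (2) of \cref{def:hyp-pac}, together with the faithfulness requirement, are immediate from the design of $\mathcal{P}$: since $\mathcal{P}$ is a faithful $\mathcal{H}_l^{(t+1)}(1)$-packing, each $U_j$ is an $l$-element transversal of the parts $V_1,\ldots,V_l$ of $V(\mathcal{H}_l^{(t)}(q))$. Viewing $\mathcal{F}$ as $l$-partite under the trivial partition $[l]=\{1\}\cup\cdots\cup\{l\}$, the copy $\mathcal{F}_j$ is then automatically faithful, with $V(\mathcal{F}_j)\subseteq V(\mathcal{H}_l^{(t)}(q))$ and $E(\mathcal{F}_j)=\mathcal{A}_j\subseteq\mathcal{R}\subseteq E(\mathcal{H}_l^{(t)}(q))$. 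Clause (3) (edge-disjointness) is exactly the matching hypothesis rephrased: distinct $\mathcal{A}_j,\mathcal{A}_{j'}$ share no common element of $\mathcal{R}$, so $E(\mathcal{F}_j)\cap E(\mathcal{F}_{j'})=\emptyset$.

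The real content lies in verifying \cref{def:indu-packing}. For condition~(1), I will use that $\mathcal{P}$ consists of edge-disjoint copies of $\mathcal{H}_l^{(t+1)}(1)$ (not merely of $\mathcal{H}_l^{(t)}(1)$): if $|U_j\cap U_{j'}|\ge t+1$ for some distinct $U_j,U_{j'}\in\mathcal{P}$, then $\binom{U_j}{t+1}\cap\binom{U_{j'}}{t+1}\neq\emptyset$, contradicting edge-disjointness of the two copies in $\mathcal{P}$. Hence $|V(\mathcal{F}_j)\cap V(\mathcal{F}_{j'})|=|U_j\cap U_{j'}|\le t$. For condition~(2), suppose $T:=U_j\cap U_{j'}$ has size exactly $t$; if $T$ were an edge of $\mathcal{F}_j$, i.e., $T\in\mathcal{A}_j\subseteq\mathcal{R}$, then since $T\in\binom{U_{j'}}{t}$ we would also have $T\in\binom{U_{j'}}{t}\cap\mathcal{R}=\mathcal{A}_{j'}$, giving $T\in\mathcal{A}_j\cap\mathcal{A}_{j'}$ and contradicting the matching condition. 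By symmetry $T\notin E(\mathcal{F}_{j'})$, which completes the verification.

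There is no genuine obstacle in this lemma: it is a bookkeeping step that justifies the two critical design choices in constructing $\mathcal{H}$, namely (i) building edges from copies of $\mathcal{H}_l^{(t+1)}(1)$ rather than $\mathcal{H}_l^{(t)}(1)$ (this is what caps pairwise vertex-set intersections at $t$) and (ii) restricting to the random set $\mathcal{R}$ (this is what turns the matching condition into the non-edge requirement on common $t$-sets). The truly hard step, which I expect will occupy the next lemma, will be to show that $\mathcal{H}$ actually contains a near-perfect matching of size roughly $\frac{\binom{l}{t}q^t}{|E(\mathcal{F})|}$, by estimating the typical degree, the maximum degree, and the maximum codegree of $\mathcal{H}$ in terms of $q$ and $\eta$ so that Pippenger's theorem (\cref{lem:pippenger}) applies.
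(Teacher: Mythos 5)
Your proposal is correct and follows essentially the same route as the paper's own proof: take the witnessing $U_j\in\mathcal{P}$ for each matching edge $\mathcal{A}_j=\binom{U_j}{t}\cap\mathcal{R}$, get $|U_j\cap U_{j'}|\le t$ from the edge-disjointness of the $\mathcal{H}_l^{(t+1)}(1)$-copies in $\mathcal{P}$, and rule out a common $t$-set being an edge by observing it would lie in $\mathcal{A}_j\cap\mathcal{A}_{j'}$, contradicting the matching property. Your write-up is if anything slightly more explicit than the paper's (e.g., in spelling out why the vertex intersections are capped at $t$), and no gap is present.
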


\begin{proof}
Let $\ma{M}=\{\mathcal{A}_1,\ldots,\mathcal{A}_m\}$ be a matching in $\mathcal{H}$. By the definition of $\mathcal{H}$, there exist $U_1,\dots,U_m\in\mathcal{P}$ such that for each $1\le i\le m$, $\mathcal{A}_i=\binom{U_i}{t}\cap \mathcal{R}$. Consider the family of $\mathcal{F}$-copies in $\mathcal{H}_l^{(t)}(q)$,
$$\{\mathcal{F}_i=(U_i
,\mathcal{A}_i):1\le i\le m\}.$$

\noindent It suffices to show that $\mathcal{F}_1,\ldots,\mathcal{F}_m$ form a faithful induced $\mathcal{F}$-packing.

First, since $\mathcal{P}$ is a faithful $\mathcal{H}_l^{(t+1)}(1)$-packing, each $\mathcal{F}_i=(U_i
,\mathcal{A}_i)$ is also faithful and for all $i\ne j$, $|U_i\cap U_j|\le t$. Second, suppose that there exist $i\neq j$ such that $A=U_i\cap U_j$ and $|A|= t$. We will show that $A\not\in\mathcal{A}_i$ and $A\not\in\mathcal{A}_j$. Suppose for contradiction that $A\in\mathcal{A}_i=\binom{U_i}{t}\cap \mathcal{R}$. Then, by the definition of $A$, it also belongs to $\binom{U_j}{t}\cap \mathcal{R}=\mathcal{A}_j$. It follows that $\mathcal{A}_i\cap\mathcal{A}_j\neq\emptyset$, which is a contradiction since $\mathcal{A}_i$ and $\mathcal{A}_j$ are contained in a matching in $\mathcal{H}$. Therefore, by \cref{def:indu-packing}, the $\mathcal{F}_i$'s form a faithful induced $\mathcal{F}$-packing, as needed.
\end{proof}

To complete the proof of \cref{thm:hyper-pac}, it suffices to show that $\mathcal{H}$ has a near optimal matching.

\begin{lemma}\label{lem:optimal-matching}
$\mathcal{H}$ has a matching with size at least $(1-o(1))\frac{\binom{l}{t}q^t}{|E(\mathcal{F})|}$,	
where $o(1)\rightarrow 0$ as $q\rightarrow \infty.$
\end{lemma}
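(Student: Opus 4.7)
The plan is to apply the Pippenger-Frankl-R\"odl theorem (\cref{lem:pippenger}) to $\mathcal{H}$ and verify its three hypotheses hold with high probability over the random choice of $\mathcal{R}$. By \eqref{Rcardinality}, $|V(\mathcal{H})| = |\mathcal{R}| = (1\pm\eta)\binom{l}{t}q^t$ with high probability, so any matching in $\mathcal{H}$ covering all but an $\epsilon$-fraction of $V(\mathcal{H})$ has size at least $(1-\epsilon)(1-\eta)\binom{l}{t}q^t/|E(\mathcal{F})|$, which yields the target bound on letting $\eta, \epsilon \to 0$ after $q \to \infty$.

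The easiest hypothesis to verify is the codegree bound. Given distinct $A, B \in V(\mathcal{H})$, any common edge of $\mathcal{H}$ through $A$ and $B$ must come from some $U \in \mathcal{P}$ with $A \cup B \subseteq U$; since $|A \cup B| \ge t+1$ and two distinct members of $\mathcal{P}$ share at most $t$ vertices by the packing property, at most one such $U$ exists, so $\Delta_2(\mathcal{H}) = O_{l,t}(1)$. This will be $o(D)$ once I show $D = \Theta(q)$.

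For the degree calculation, fix $A \in V(\mathcal{H})$ and let $n_A := |\{U \in \mathcal{P} : A \subseteq U\}|$. The identity $\sum_A n_A = |\mathcal{P}|\binom{l}{t}$ together with $|\mathcal{P}| \ge (1-\delta)q^{t+1}$ and the deterministic upper bound $n_A \le q$ (also a consequence of the packing property, since distinct $U$'s containing $A$ have pairwise disjoint extensions of $A$) forces $n_A \ge (1-\sqrt{\delta})q$ for all but an $O(\sqrt{\delta})$-fraction of $t$-transversals via a Markov-type argument; I will absorb the rest into the $\gamma|V(\mathcal{H})|$ exceptional vertices permitted in hypothesis (i) of \cref{lem:pippenger}. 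For a typical $A$, conditional on $A \in \mathcal{R}$, the degree $d_\mathcal{H}(A)$ is a sum of $n_A$ indicators, one for each $U \in \mathcal{P}$ with $A \subseteq U$, of the event ``$\binom{U}{t}\cap\mathcal{R}$ forms a copy of $\mathcal{F}$''. The crucial observation is that for any two distinct members $U, U' \in \mathcal{P}$ both containing $A$, the packing inequality $|U \cap U'| \le t$ combined with $A \subseteq U \cap U'$ and $|A| = t$ forces $U \cap U' = A$, so $\binom{U}{t}\cap\binom{U'}{t} = \{A\}$. Consequently, these indicators depend on disjoint subsets of the independent variables $\{\mathbf{1}[B\in\mathcal{R}]: B \neq A\}$ and are conditionally independent, each with the same probability $p_0 = p_0(\eta, \mathcal{F}) > 0$. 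Hoeffding's inequality (\cref{prop:ineq}(ii)) then gives $|d_\mathcal{H}(A) - n_A p_0| \le \gamma n_A p_0$ with probability $1 - e^{-\Omega(q)}$, and a union bound over the $O(q^t)$ vertices preserves this estimate for all typical $A$ simultaneously.

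Setting $D := qp_0 = \Theta(q)$, the above verifies hypotheses (i) and (ii) of \cref{lem:pippenger} with $k$ an absolute constant, while (iii) was already checked. The theorem then produces a matching covering all but $\epsilon|V(\mathcal{H})|$ vertices, so its size is at least $(1-\epsilon)|V(\mathcal{H})|/|E(\mathcal{F})| = (1-o(1))\binom{l}{t}q^t/|E(\mathcal{F})|$, as required. The main obstacle I foresee is bookkeeping: the parameters $\delta, \eta, \gamma, \epsilon$ must be chosen in the right order so that the Hoeffding concentration, the Markov-type bound for $n_A$, and Pippenger's $\gamma$ all align. This is routine once one notes that $D = \Theta(q)$ makes the exponential tail in Hoeffding easily beat the polynomial union bound.
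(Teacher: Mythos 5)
Your proposal is correct and takes essentially the same route as the paper: both verify Pippenger's three hypotheses with $D=\Theta(q)$, using the double-counting bound $\sum_A d_{\mathcal P}(A)=|\mathcal P|\binom{l}{t}$ together with $d_{\mathcal P}(A)\le q$ to show almost all transversals lie in roughly $q$ members of $\mathcal P$, the observation that distinct $U,U'\supseteq A$ in $\mathcal P$ meet exactly in $A$ to get independence of the indicators and apply Hoeffding, and the packing property to bound the codegree by $1$. The only cosmetic difference is that you union-bound the degree concentration over all vertices, whereas the paper uses a Markov argument to control the fraction of exceptional vertices; both are valid.
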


\begin{proof}
We will prove the lemma by applying \cref{lem:pippenger}. For that purpose, we will show that $\mathcal{H}$ satisfies assumptions (i)-(iii) of \cref{lem:pippenger}.

First of all, we will show that $\mathcal{H}$ is almost regular, that is, it satisfies \cref{lem:pippenger} (i).
 For every $A\in V(\mathcal{H})=\mathcal{R}$, it follows by the definition of $\mathcal{H}$ that
    $$d_{\mathcal{H}}(A)=|\{U\in\mathcal{P}:A\subseteq U \text{ and }\binom{U}{t}\cap\mathcal{R}\text{ is a copy of }\mathcal{F}\}|.$$
	For each $U\in \ma{P}$ with $A \subseteq U$, let $X_{A,U}$ denote the indicator random variable for the event that $\binom{U}{t}\cap\mathcal{R}$ forms a copy of $\ma{F}$. Note that $\Pr[X_{A,U}=1] = \lambda_{\mathcal{F}}(1-\eta)^{|E(\ma{F})|-1}\eta^{\binom{l}{t}-|E(\ma{F})|}$, where $\lambda_{\mathcal{F}}$ is the number of nonequivalent embeddings of $\ma{F}$ into $\bi{U}{t}$ with one edge fixed. For convenience, let $p$ denote the above probability. By definition, $$d_\ma{H}(A)=\sum_{U\in \ma{P}:A \subseteq U}X_{A,U}.$$
    Observe that by the property of $\mathcal{P}$, for a fixed $A$ the family $\{X_{A,U}: A\subseteq U\in\mathcal{P}\}$ of random variables are mutually independent.

     Let $d_{\ma{P}}(A)=|\{U\in\mathcal{P}:A\subseteq U \}|$. Then $$\mathbb{E}[d_\ma{H}(A)]=\sum_{U\in \ma{P}:A \subseteq U}\Pr[X_{A,U}=1]=d_{\ma{P}}(A)\cdot p.$$

    We will make use of the following claim, whose proof is postponed to the end of this section.

     \begin{claim}\label{claim}
     For every $A\in E(\ma{H}_l^{(t)}(q))$, $|d_{\mathcal{P}}(A)|\le q$.
     Furthermore, there are at least $1-\sqrt{\delta}$ fraction of $A\in E(\ma{H}_l^{(t)}(q))$ such that
         $|d_{\mathcal{P}}(A)|\ge (1-\sqrt{\delta})q.$
    \end{claim}

   Let $$\mathcal{R}'=\{A\in \mathcal{R}: |d_{\mathcal{P}}(A)|\ge (1-\sqrt{\delta})q\}.$$ Then, by \cref{claim} and \eqref{Rcardinality}, with high probability we have
    $$|\mathcal{R}'|\ge |\ma{R}|-\sqrt{\delta}|E(\ma{H}_l^{(t)}(q))|\ge(1-2\eta-\sqrt{\delta})|\mathcal{R}|.$$
    As $d_{\mathcal{H}}(A)$ is a sum of independent Bernoulli random variables,  by \cref{prop:ineq} (ii), we have that for each $A\in\mathcal{R}$,
    \begin{equation*}
        \Pr\left[|d_{\mathcal{H}}(A)-\E[d_{\mathcal{H}}(A)]|\le\sqrt{\delta}\E[d_{\mathcal{H}}(A)]\right]=\Pr\left[|d_{\mathcal{H}}(A)-d_{\ma{P}}(A)\cdot p]|\le\sqrt{\delta}d_{\ma{P}}(A)\cdot p\right]\ge1-\sigma,
    \end{equation*}
    where $\sigma \rightarrow 0$ as $q\rightarrow \infty$. As for each $A\in\mathcal{R}'$, $(1-\sqrt{\delta})q\le d_{\ma{P}}(A)\le q$. We have that for each $A\in\mathcal{R}'$, with probability at least $1-\sigma$,
    \begin{equation}\label{eq2}
        |d_{\mathcal{H}}(A)-pq|\le2\sqrt{\delta} pq.
    \end{equation}

    Let $\mathcal{R}''=\{A\in \mathcal{R}':A\text{ satisfies \eqref{eq2}}\}$. Then, according to the above discussion, $$\E[|\mathcal{R}'\setminus\mathcal{R}''|]\le |R'|\cdot \sigma.$$
    By \cref{prop:ineq} (i), we have
    \begin{equation*}
        \Pr[|\mathcal{R}'\setminus\mathcal{R}''|\ge \sqrt{\sigma}|\mathcal{R}'|]\le \frac{\mathbb{E}[|\mathcal{R}'\setminus\mathcal{R}''|]}{\sqrt{\sigma}|\mathcal{R}'|}\le \sqrt{\sigma},
    \end{equation*}


     \noindent which implies that with high probability $|\mathcal{R}'\setminus\mathcal{R}''|\le \sqrt{\sigma}|\mathcal{R}'|\le \sqrt{\sigma}|\mathcal{R}|$. It follows that for all $A\in\mathcal{R}''$ and therefore for all but at most $(2\eta+\sqrt{\delta}+\sqrt{\sigma})|\mathcal{R}|$ choices of $A\in \mathcal{R}$, \eqref{eq2} holds. 
     We conclude that with high probability, the random hypergraph $\ma{H}$ satisfies \cref{lem:pippenger} (i) with $D:=pq$.

We proceed to show that
$\mathcal{H}$ has bounded maximum degree. According to the upper bound in \cref{claim}, for each $A\in \ma{R}$, $d_\ma{H}(A)$ is a sum of at most $q$ independent Bernoulli random variables. Applying \cref{prop:ineq} (ii) with $t=q^{0.8}$ gives that $$\Pr[d_\ma{H}(A)\ge \mathbb{E}[d_\ma{H}(A)]+q^{0.8}]\le e^{-\Theta(q^{0.6})}.$$ By the union bound,
    \begin{equation*}
    \Pr[\exists A\in \ma{R}\text{ such that }d_\ma{H}(A)\ge \mathbb{E}[d_\ma{H}(A)]+q^{0.8}]\le \binom{l}{t}q^{t}e^{-\Theta(q^{0.6})}=o(1).
    \end{equation*}
     It follows that with high probability
     for all $A\in \mathcal{R}$, $d_{\mathcal{H}}(A)\le \mathbb{E}[d_{\mathcal{H}}(A)]+q^{0.8}$,  which implies that $\ma{H}$ satisfies \cref{lem:pippenger} (ii) with high probability.

Lastly, we show that $\mathcal{H}$ has small maximum codegree. Recall that for all distinct $U,U'\in\mathcal{P}$, $|U\cap U'|\le t$. Since for all distinct $A,B\in\mathcal{R}$, $|A\cup B|\ge t+1$, we have
    \begin{equation}\nonumber
    d_{\mathcal{H}}(A,B)=|\{U\in\mathcal{P}:A\subseteq U,B\subseteq U, \binom{U}{t}\cap\mathcal{R} \text{ is a copy of }\mathcal{F}\}|\le 1.
    \end{equation}
    It follows that $\Delta_2(\mathcal{H})\le 1$. Therefore, $\ma{H}$ satisfies \cref{lem:pippenger} (iii).

To conclude, with high probability, the random hypergraph $\mathcal{H}$ constructed above satisfies $|V(\mathcal{H})|=|\mathcal{R}|\ge (1-o(1))\cdot\binom{l}{t}q^t$ and also the assumptions of \cref{lem:pippenger}. Therefore, $\mathcal{H}$ has a matching of size at least $(1-o(1))\cdot\frac{|V(\mathcal{H})|}{|E(\mathcal{F})|}=(1-o(1))\cdot\frac{\binom{l}{t}q^t}{|E(\mathcal{F})|}$.
\end{proof}

It remains to prove \cref{claim}.

\begin{proof}[Proof of \cref{claim}]
    To prove the first part of the claim, note that for distinct $U,U'\in\mathcal{P}$, $|U\cap U'|\le t$. Therefore, for each $A\in E(\ma{H}_l^{(t)}(q))$, the members in $\{U/A:U\in \mathcal{P},A\subseteq U\}$ are pairwise disjoint.
    It follows that $d_\ma{P}(A)\le \frac{(l-t)q}{l-t}=q$.

    To prove the second part of the claim,  assume that there are more than $\sqrt{\delta}$ fraction of $A\in \mathcal{R}$ such that $d_{\mathcal{P}}(A)<(1-\sqrt{\delta})q$. Then
    \begin{equation}\label{eq>}
         \begin{aligned}
            \sum_{A\in E(\ma{H}_l^{(t)}(q))}d_{\ma{P}}(A)
            & =\sum_{\substack{A\in E(\ma{H}_l^{(t)}(q)):\\d_{\mathcal{P}}(A)<(1-\sqrt{\delta})q}}d_{\ma{P}}(A)+\sum_{\substack{A\in E(\ma{H}_l^{(t)}(q)):\\d_{\mathcal{P}}(A)\ge (1-\sqrt{\delta})q}}d_{\ma{P}}(A)\\
            & <\sqrt{\delta}\bi{l}{t}q^t\cdot(1-\sqrt{\delta})q+(1-\sqrt{\delta})\bi{l}{t}q^t\cdot q.
         \end{aligned}
    \end{equation}

    \noindent On the other hand, observe that
    \begin{equation}\label{eq=}
        \sum_{A\in E(\ma{H}_l^{(t)}(q))}d_{\mathcal{P}}(A)=\sum_{U \in \mathcal{P}}\binom{l}{t}=|\mathcal{P}|\binom{l}{t}
    \end{equation}
    and recall that
    \begin{equation}\label{eq<}
            |\mathcal{P}|\ge(1-\delta)q^{t+1}.
    \end{equation}
    Combining \eqref{eq>}, \eqref{eq=}, and \eqref{eq<}, one can deduce that $1-\delta<1-\delta$, a contradiction. Hence, there are at most $\sqrt{\delta}$ fraction of $A\in E(\ma{H}_l^{(t)}(q))$ with $|d_{\mathcal{P}}(A)|< (1-\sqrt{\delta})q$, completing the proof of the claim.
  \end{proof}

\begin{proof}[Proof of \cref{thm:hyper-pac}]
Combining the Lemmas \ref{lem:packing-matching} and \ref{lem:optimal-matching}, \cref{thm:hyper-pac} follows fairly straightforwardly. 
\end{proof}

\section{Concluding remarks}\label{sec:con}

\noindent In this paper we determine, for all fixed integers $c,l$, the exact value of the limit $R_{c,l}=\lim_{q\rightarrow\infty}\fr{M_{c,l}(q)}{q^{\lc l/c \rc}}$. The methods applied by this paper are quite different from those applied by previous papers. For example, the constructions of frameproof codes listed below \eqref{eq:Rcl} are all algebraic and rely on the properties of finite fields. In particular, those constructions are all explicit. However, our construction is probabilistic and therefore non-explicit. The following question remains open.

\begin{question}
    Can we explicitly construct, for general values of $c,l$, frameproof codes with cardinality asymptotically matching the probabilistic construction given in \cref{thm:lower-bd}?
\end{question}

\section*{Acknowledgements}

\noindent This project is supported by the National Key Research and Development Program of China under Grant No. 2021YFA1001000, the National Natural Science Foundation of China under Grant Nos. 12101364 and 12231014, and the Natural Science Foundation of Shandong Province under Grant No. ZR2021QA005. The last author was grateful to Prof. Ying Miao for bringing frameproof codes \cite{blackburn2003frameproof} into his attention in 2014.

{\small
\bibliographystyle{plain}
\bibliography{fpc}
}

\end{document}